\renewcommand{\thesection}{\Roman{section}}
\renewcommand{\thesubsection}{\Roman{section} \Alph{subsection}}
\renewcommand{\thesubsubsection}{\Roman{section} \Alph{subsection} \arabic{subsubsection}}
\def\p@subsection{}
\def\p@subsubsection{}
\newtheorem{theorem}{Theorem}
\newtheorem{lemma}{Lemma}
\newtheorem{definition}{Definition}
\newtheorem{conjecture}{Conjecture}
\newcommand\footnoteref[1]{\protected@xdef\@thefnmark{\ref{#1}}\@footnotemark}
\DeclarePairedDelimiterXPP{\norm}[1]{}{\lVert}{\rVert}{}{{#1}}
\DeclarePairedDelimiterXPP{\opnorm}[1]{}{\lVert}{\rVert}{_\infty}{{#1}}
\DeclarePairedDelimiterXPP{\onenorm}[1]{}{\lVert}{\rVert}{_1}{{#1}}
\newcommand{\comp}{\mathcal{C}}
\newcommand{\tnorm}{\mathcal{T}}
\newcommand{\ent}[2]{H_{\rm c}^{#1, #2}}
\newcommand{\bw}{{\rm bw}}
\newcommand{\Qref}{Q_{\mathrm{ref}}}
\newcommand{\Max}{ {\rm max} }   
\newcommand{\hc}{ {\rm h.c.} }
\newcommand{\Tr}{{\rm Tr}}   
\def\id{\mathbbm{1}}   
\newcommand{\Sites}{n}  
\newcommand{\LParen}{ \bm{(} }
\newcommand{\RParen}{ \bm{)} }
\newcommand*{\Set}[1]{\left\{  #1  \right\}}
\newcommand*{\ket}[1]{\lvert {#1}\rangle}
\newcommand*{\ketbra}[2]{\lvert {#1}\rangle\!\langle {#2}\rvert}
\begin{document}
 
\title{Resource theory of quantum uncomplexity
}
\author{Nicole~Yunger~Halpern}
\email{nicoleyh@umd.edu}
\affiliation{Joint Center for Quantum Information and Computer Science, NIST and University of Maryland, College Park, MD 20742, USA}
\affiliation{Institute for Physical Science and Technology, University of Maryland, College Park, MD 20742, USA}
\affiliation{ITAMP, Harvard-Smithsonian Center for Astrophysics, Cambridge, MA 02138, USA}
\affiliation{Department of Physics, Harvard University, Cambridge, MA 02138, USA}
%
\author{Naga~B.\ T.\ Kothakonda}
\affiliation{Dahlem Center for Complex Quantum Systems, Freie Universit{\"a}t Berlin, 14195 Berlin, Germany}
\affiliation{Institute for Theoretical Physics, University of Cologne, D-50937 Cologne, Germany}

\author{Jonas~Haferkamp}
\affiliation{Dahlem Center for Complex Quantum Systems, Freie Universit{\"a}t Berlin, 14195 Berlin, Germany}
\affiliation{Helmholtz-Zentrum Berlin f{\"u}r Materialien und Energie, 14109 Berlin, Germany}
\author{Anthony~Munson}
\affiliation{Joint Center for Quantum Information and Computer Science, NIST and University of Maryland, College Park, MD 20742, USA}
\author{Jens~Eisert}
\affiliation{Dahlem Center for Complex Quantum Systems, Freie Universit{\"a}t Berlin, 14195 Berlin, Germany}
\affiliation{Helmholtz-Zentrum Berlin f{\"u}r Materialien und Energie, 14109 Berlin, Germany}
\author{Philippe~Faist}
\affiliation{Dahlem Center for Complex Quantum Systems, Freie Universit{\"a}t Berlin, 14195 Berlin, Germany}
\date{July 14, 2022} 

%
%
\begin{abstract}
Quantum complexity is emerging as a key property of many-body systems, including black holes, topological materials, and early quantum computers. A state's complexity quantifies the number of computational gates required to prepare the state from a simple tensor product. The greater a state's distance from maximal complexity, or ``uncomplexity,'' the more useful the state is as input to a quantum computation. Separately, resource theories---simple models for agents subject to constraints---are burgeoning in quantum information theory. We unite the two domains, confirming Brown and Susskind's conjecture that a resource theory of uncomplexity can be defined. The allowed operations, \emph{fuzzy operations,} are slightly random implementations of two-qubit gates chosen by an agent. We formalize two operational tasks, uncomplexity extraction and expenditure. Their optimal efficiencies depend on an entropy that we engineer to reflect complexity. We also present two monotones, uncomplexity measures that decline monotonically under fuzzy operations, in certain regimes. This work unleashes on many-body complexity the resource-theory toolkit from quantum information theory.
\end{abstract}
\maketitle  

Quantum complexity has recently swept from quantum computation
across many-body physics.
A state's \emph{quantum complexity} quantifies the difficulty of preparing the state from a simple fiducial state, often labeled $\ket{0^\Sites}$ for $\Sites$ qubits,
or of uncomputing a state to $\ket{0^\Sites}$.
For instance, a random circuit's output has a quantum complexity
that advantages certain quantum computations over classical competitors \cite{PhysRevLett.121.030501,Boixo,Arute_19_Quantum}.
Complexity also quantifies the difficulty of discrimination and of preparing superpositions~\cite{Aaronson_20_On,Girolami_21_Quantifying,Girolami_19_How}.
In condensed matter, topological phases are distinguished by complexities that scale linearly with the system size~\cite{Chen_11_Classification,Ahoronov_11_On,PhysRevLett.107.210501,PhysRevA.88.032321,Miller_18_Latent,Ali_20_Post,Liu_20_Circuit,Xiong_20_Nonanalyticity,Caputa_22_Quantum}.
In many-body physics, random 
evolutions increase complexity beyond when most physical quantities,
including correlators, equilibrate~\cite{Brandao_21_Models,Haferkamp_21_Linear,Li_22_Short,Brown_21_Quantum}.
Complexity saturation therefore forms a late stage of quantum many-body equilibration. 
This observation underpins a proposal about the
anti-de-Sitter-space/conformal-field-theory (AdS/CFT) holographic correspondence: There, a wormhole connecting two black holes is dual to a field-theoretic state. 
The state's complexity is conjectured to be proportional to
the wormhole's length~\cite{Susskind_16_Computational,Stanford_14_Complexity,Brown_16_Complexity,Brown_18_Second,Bouland_19_Computational,Brown_16_Holographic}.
Such diverse applications portray 
complexity as a physically impactful property.

Quantum computation is best begun with a low-complexity state:
A quantum computer needs ``clean'' qubits in the state $\ket{0^n}$
as we need blank paper when computing with a pencil.
\begin{figure}
  \centering
  \includegraphics{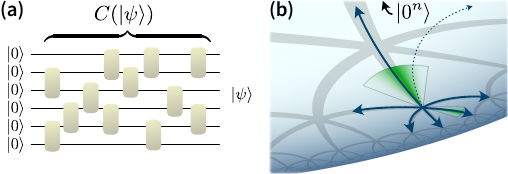}
  \caption{State complexity and its geometry.
    \textbf{(a)}~A pure $\Sites$-qubit state $\ket\psi$ has an \emph{exact circuit complexity} $\comp(\ket\psi)$ equal to the least number of gates required to prepare $\ket\psi$ from $\ket{0^\Sites}$.  The state's \emph{exact uncomplexity} is the distance
    $\comp_{\mathrm{max}} - \comp(\ket\psi)$ to 
    the maximal $\Sites$-qubit state complexity, $\comp_{\mathrm{max}} \sim e^n$.
    \textbf{(b)}~In a revision of Nielsen's
    geometry~\cite{Nielsen_05_Geometric,Susskind_18_Three},
    complexity curves the state space negatively~\cite{Brown_17_Quantum}.
    Applying a gate moves a state through 
    a unit length along some direction. 
    Most directions lead to higher-complexity states.
    To transform into a generic state $\ket{\psi'}$,
    $\ket\psi$ likely must pass through $\ket{0^\Sites}$;
    no significantly shorter path exists.
    Consequently, uncomplexity is desirable in quantum computation:
    Given a complex $\ket\psi$, to prepare a desired output $\ket{\psi'}$, one must uncompute $\ket{\psi'}$ to $\ket{0^\Sites}$.     Our resource theory slightly randomizes the gates (green shaded regions), leading them to realistically
    increase the state's complexity.}
  \label{fig:SusskindCircuitsAndComplexity}
\end{figure}
To quantify a state's resourcefulness in computation intuitively,
Brown and Susskind define a state's \emph{uncomplexity} as the gap between the state's greatest possible complexity, $\comp_\Max$, and actual complexity [Fig.~\ref{fig:SusskindCircuitsAndComplexity}\textbf{(a)}]~\cite{Brown_18_Second}. 
According to a counting argument, 
an $\Sites$-qubit state's $\comp_\Max$ scales as $e^\Sites$~\cite{Susskind_18_Three}. 
Uncomplexity appears to decrease monotonically under random 
dynamics;
complexity obeys an analog of the second law of thermodynamics~\cite{Brown_17_Quantum,Brown_18_Second,Susskind_18_Three,Bao_18_Quantum,Karar_18_Holographic,Lezgi_21_Complexity,Bai_22_Towards}.
This ``second law of uncomplexity'' led Brown and Susskind to conjecture that a resource theory for quantum uncomplexity can be defined. 
We formulate that resource theory precisely and use it to prove bounds on operational tasks' efficiencies.

A \emph{resource theory} is a simple quantum-information-theoretic model for an agent restricted to performing only certain operations. 
For example, in the resource theory of entanglement, agents can perform only local operations and classical communication. Resource theorists study
which state-to-state transformations the allowed operations can and cannot effect~\cite{Chitambar_19_Quantum}.
States impossible to prepare are scarce \emph{resources}, which may facilitate operational tasks. In the entanglement-theory example, entanglement is a resource usable to simulate a quantum channel~\cite{Horodecki_09_Quantum}.
If a resource theory's rules encode fundamental constraints of Nature, the conclusions extend from the agent's capabilities to natural evolutions.
Resource theories model diverse phenomena including informational nonequilibrium~\cite{Horodecki_03_Local,Horodecki_03_Reversible,Gour_15_Resource}, 
thermodynamics~\cite{Lieb_99_Physics,Janzing_00_Thermodynamic,Brandao_13_Resource,Faist_15_Gibbs,NYH_16_Beyond,NYH_18_Beyond,Lostaglio_17_Thermodynamic,Guryanova_16_Thermodynamics,NYH_16_Microcanonical,Sparaciari_17_Resource,Baghali_20_Resource},
coherence~\cite{Aberg_06_Quantifying,Baumgratz_14_Quantifying,Winter_16_Operational,Marvian_16_How,Dana_17_Resource,Streltsov_17_Towards,Bischof_19_Resource,Saxena_20_Dynamical,PhysRevLett.119.140402},
and quantum channels~\cite{Gour_19_Comparison,Takagi_19_General,Liu_19_Resource,Gour_19_How,Theurer_19_Quantifying,Liu_20_Operational}.
From their origins in quantum information theory,
resource theories have recently infiltrated other fields of science~\cite{NYH_17_Toward,Bernamonti_18_Holographic,Lorch_18_Optimal,Lostaglio_18_Elementary,Song_18_Quantifying,deLimaBernardo_18_Proposal,Alhambra_19_Heat,Chin_19_Partial,Holmes_19_Coherent,Henao_19_Experimental,NYH_20_Fundamental,Sarkar_20_Characterization,Messinger_20_Coherence,Tan_20_Identifying,Liu_20_Many,Luders_21_Quantifying,White_21_Conformal,Sparaciari_21_Bounding,Wu_21_Experimental,Paiva_21_Dynamics}.
Motivated by these studies' impact, 
we introduce a resource theory for quantum uncomplexity
at the intersection of high-energy theory and condensed matter.

We confirm Brown and Susskind's conjecture that
a resource theory for uncomplexity can be defined.
Upon defining the theory, we use it to
define two operational tasks:
\emph{Uncomplexity extraction} distills $\ket{0}$'s
from an arbitrary state.
\emph{Expending} uncomplexity, one can emulate an arbitrary
state. We quantify these tasks' optimal efficiencies with an entropy
that we introduce. This \emph{complexity entropy} measures the 
randomness that a state appears to have, to a realistic observer able to measure only simple observables.
In certain regimes, we prove, the complexity entropy is a \emph{monotone}, or resource measure, decreasing monotonically under allowed operations. 
This monotone result and another that we prove are
resource-theory versions of Brown and Susskind's second law of complexity.

A challenge in defining the resource theory follows from the agent's agency, or ability to choose operations. It is natural to formalize operations as gates for consistency with circuit-based complexity studies.  If able to implement any gates, though, the agent can uncompute any pure state to $\ket{0^n}$. 
Uncomplexity will 
not be a scarce resource; the resource theory will be a mockery. 
However, uncomputation circuits
lack robustness against imperfections in the gates' implementation.
If the gates are slightly noisy, a deep uncomputation circuit
likely prepares a highly mixed state, on average.
Yet mild noise should not significantly change qualitative outcomes
achievable by the agent; the outcomes should be robust.
We ensure this robustness by designating the allowed operations
as \emph{fuzzy gates}, slightly random approximations to the gates
that the agent wishes to perform,
modeling the noise in realistic circuit implementations.
Upon undergoing too many fuzzy gates, 
a state grows too random to be useful.
Hence the fuzziness prevents the agent from increasing a state's uncomplexity with high probability
[Fig.~\ref{fig:SusskindCircuitsAndComplexity}\textbf{(b)}].
This fuzziness also echoes the widespread modeling of chaos with randomness~\cite{Brown_11_Random,QuantumChaos,Hayden_07_Black,Cotler_17_Chaos,Bertini_20_Scrambling}.

This work is organized as follows.
We define the resource theory of uncomplexity,
then prove one resource-theory version of Brown and Susskind's second law of complexity. We formalize two operational tasks in the resource theory and quantify their efficiencies with a \emph{complexity entropy} that we define and that obeys another second law. 
We conclude with opportunities unveiled by this work.

\emph{Definition of the resource theory of uncomplexity.---}Consider a system of $\Sites$ qubits.
Denote by $\sigma_z$ the Pauli $z$-operator, 
by $\ket{0}$ its eigenvalue-1 eigenvector,
and by $\id$ the single-qubit identity operator.
Let $\ket{0^k}  :=  \ket{0}^{\otimes k}$.

Our resource theory's allowed operations consist of building blocks that we call \emph{fuzzy gates}. A fuzzy gate is effected when the resource-theory agent attempts to perform any desired gate 
$U \in \mathrm{SU}(4)$ on any two qubits.
(Our results extend to alternative gate sets favored in the holographic literature~\footnote{
First, each gate can couple $k \geq 2$ qubits.
Second, the target gates $U$ can form a discrete set.}\cite{Brown_18_Second,Lin_19_Cayley}.)
The implemented gate is a slightly random variation on the target gate, as motivated in the introduction~\footnote{Usually, to make a resource theory robust against noise, one only needs for the agent to tolerate a nonzero probability of preparing the target state incorrectly. In complexity-related contexts, though, individual gates' errors can snowball throughout a circuit, rather than appearing only in the end result. We therefore incorporate the errors into the allowed operations}.
We model the randomness as follows.
Denote by $d \tilde{U}$ the Haar (uniform) measure over SU(4).
Fix an error parameter $\epsilon>0$.
Denote by $p_{U,\epsilon}(\tilde{U})$ any normalized probability density over the two-qubit gates $\tilde{U}$ that satisfies two assumptions:
(i)~$p_{U,\epsilon}$ introduces noise in all directions of the two-qubit--gate space around $U$, being nonzero on an open set that contains $U$. 
(ii)~The measure $p_{U,\epsilon}(\tilde{U})$ vanishes for all unitaries $\tilde{U}$ far from the target gate: $\opnorm{U - \tilde{U}}>\epsilon$, wherein $\norm{.}_\infty$ denotes the operator norm. 
The implemented gate is a $\tilde{U}$ chosen according to 
the measure $p_{U,\epsilon}(\tilde{U})\,d \tilde{U}$ \footnote{
We can illustrate the probability density $p_{U,\epsilon}$ with two examples.
First, $\tilde{U}$ can be chosen uniformly randomly from the two-qubit unitaries $O(\epsilon)$-close to $U$ in any norm.
Second, denote by $\{ P_{j,k} \}$ any basis for
the traceless 2-qubit Hermitian operators.
Assign random coefficients 
$\alpha_{j,k} \in [ -O(\epsilon), \, O(\epsilon) ]$.
The Hamiltonian
\unexpanded{$H = \sum \alpha_{j,k} P_{j,k}$}
perturbs $U$ into $\tilde{U} = e^{iH} U$.
}.

\begin{definition}[Fuzzy gates and operations]
\label{def_Fuzzy_Op}
Denote by $U$ an arbitrary two-qubit gate.
The \emph{fuzzy gate} $\tilde{U}$ is selected randomly
according to any distribution 
$p_{U,\epsilon}(\tilde{U}) \, d\tilde{U}$ that satisfies conditions~(i) and~(ii) above.
Every composition of fuzzy gates is a \emph{fuzzy operation}.
\end{definition}

A resource theory's allowed operations
form a set closed under composition~\cite{Chitambar_19_Quantum}.
We therefore choose our free operations to be fuzzy operations, which include all compositions of fuzzy gates; the fuzzy operations form a set closed under composition.
The fuzziness suggests two variants of the resource theory.
In one variant, the initial state is pure, and the agent knows which
gates are applied. A unitary models the evolution, and the state remains pure. 
Holographic literature motivates this variant~\cite{Brown_11_Random,QuantumChaos,Hayden_07_Black,Cotler_17_Chaos,Bertini_20_Scrambling},
whereas quantum-information conventions motivate the second.
In the second variant, whenever applying a fuzzy gate, the agent lacks any knowledge of the noise sample. All possible instances of the gate are averaged over, increasing the state's mixedness~\cite{Horodecki_03_Local,Horodecki_03_Reversible,Gour_15_Resource,Chiribella_17_Microcanonical}.
We further motivate and analyze both variants below.

The allowed operations exclude 
the tensoring on and discarding of states.
This lack, although unusual, has precedents~\cite{Sparaciari_17_Resource,Chiribella_17_Microcanonical}. 
No states are free because any tensored-on state benefits quantum computation:
Consider tensoring a maximally complex $m$-qubit state
onto a maximally complex $n$-qubit state.
$\comp_\Max$ grows to $\sim e^{n+m}$,
whereas the actual complexity grows only to $\sim e^n + e^m$~\footnote{The number of gates needed to prepare the tensor product equals the number needed to prepare one factor plus the number needed to prepare the other factor.}
Hence the tensoring-on raises the uncomplexity from 0.
Even tensoring on a maximally mixed state can boost computational power,
as shown by the one-clean-qubit computational
model~\cite{Knill_98_Power,Brown_18_Second,Susskind_18_Three}.
The allowed operations exclude the discarding of subsystems because the resource theory is intended to model Brown and Susskind's setup~\cite{Brown_18_Second}---a system whose Hilbert space remains fixed.

\emph{Second law of complexity for pure-state variant.---}In the resource theory's first variant,
the initial state is pure, the agent always knows which gate is applied, and the evolution is unitary.  
This setting is
common in condensed-matter theory and high-energy physics. 
There, a unitary circuit of randomly sampled gates mimics chaos in certain ways~\cite{Brown_11_Random,QuantumChaos,Hayden_07_Black,Cotler_17_Chaos,Bertini_20_Scrambling}.
This scenario precludes challenges such as defining mixed-state complexity.
In this variant, we prove a version of Brown and Susskind's ``second law of complexity''~\cite{Brown_18_Second}---in resource-theory parlance, a \emph{monotone} statement. Monotones are functions $f$ that quantify a resource's monotonic decline under allowed operations.
For any state $\rho$ and allowed operation $\mathcal{E}$,
$f (\rho)  \geq  f  \LParen \mathcal{E}(\rho)  \RParen$.
Different monotones quantify a state's usefulness in different tasks.
For example, consider extracting work by thermalizing an arbitrary state $\rho$ (analogously to extracting work from an expanding gas) or performing work to prepare $\rho$ 
(analogously to compressing a gas).
The extractable and required work are monotones in a thermodynamic resource theory~\cite{Horodecki_13_Fundamental}. 
Monotones resemble free energy, but each resource theory has multiple monotones; there is no ``one monotone to rule them all''~\cite{Gour_15_Resource}.

We prove that two functions are 
fuzzy-operation monotones in certain regimes.
The conditionality reflects the notorious difficulty of proving
that complexity measures grow monotonically under random dynamics~\cite{Knill_95_Approximation,Nielsen_05_Geometric,Nielsen_06_Quantum,Nielsen_06_Optimal,Gosset_14_Algorithm,Roberts_17_Chaos,Brown_18_Second,Balasubramanian_20_Quantum,Brandao_21_Models,Eisert_21_Entangling,Haferkamp_21_Linear,Balasubramanian_21_Complexity}.
The first monotone depends on a \emph{brickwork circuit}, a common circuit formed from staggered layers of gates (Fig.~\ref{fig_Brickwork} in App.~\ref{app_Prove_Monotones}).
Define the \emph{brickwork complexity} $\mathcal{C}_\bw(|\psi\rangle)$ as the least number of gates in any brickwork circuit that prepares a pure state $|\psi\rangle$.
The \emph{brickwork uncomplexity} is 
$\comp_\Max - \comp_\bw (\ket{\psi})$.

\begin{theorem}[``Second law'' for brickwork uncomplexity: informal]
\label{thm:monotone_informal}
Let $\ket{\psi}$ denote an arbitrary $\Sites$-qubit pure state.
The brickwork uncomplexity $\comp_{\mathrm{max}} - \comp_\bw(\ket{\psi})$
cannot increase under any fuzzy brickwork circuit $\tilde{U}$
of $\geq n$ layers, except in a measure-0 set of $\ket{\psi}$-preparation-and-$\tilde{U}$-sampling experiments.
\end{theorem}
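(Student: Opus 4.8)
The plan is to recast the statement in terms of complexity rather than uncomplexity: since $\comp_\Max$ is a constant, the brickwork uncomplexity $\comp_\Max - \comp_\bw(\ket\psi)$ fails to increase under $\tilde U$ exactly when $\comp_\bw(\tilde U\ket\psi) \geq \comp_\bw(\ket\psi)$. So I would prove that, outside a measure-zero set of experiments, applying a fuzzy brickwork circuit never lowers the brickwork complexity. The whole argument is driven by one dimension-counting engine, which I would isolate as a lemma: for every $m \leq \comp_\Max$, a generic brickwork circuit of $m$ gates prepares a state whose brickwork complexity is exactly $m$ (and for $m > \comp_\Max$ it equals $\comp_\Max$). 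Granting this lemma, the theorem follows quickly. Model the preparation of $\ket\psi$ as a brickwork circuit of $p$ gates and $\tilde U$ as a fuzzy brickwork circuit of $q$ gates with $\geq \Sites$ layers; the composite is a brickwork circuit of $p+q$ gates preparing $\tilde U\ket\psi$. The lemma gives, for generic preparation-and-sampling, $\comp_\bw(\ket\psi) = \min(p, \comp_\Max)$ and $\comp_\bw(\tilde U\ket\psi) = \min(p+q, \comp_\Max)$, and the latter is never smaller than the former.

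To prove the lemma I would work in the projective state space $\mathbb{CP}^{2^\Sites-1}$ and, for each $m$, let $S_m$ be the set of states of brickwork complexity $\leq m$. Since $S_m$ is the continuous image of the compact group $(\mathrm{SU}(4))^m$ under the parametrization map $F_m(\vec U) = U_m\cdots U_1\ket{0^\Sites}$, it is a compact (hence closed) semialgebraic set of real dimension at most $15m$; in particular, for $m < \comp_\Max$ it is a proper, measure-zero subset of the full state space of real dimension $2(2^\Sites-1)$, which first gets saturated at $m \sim \comp_\Max$. The content of the lemma is that the degenerate subset $F_m^{-1}(S_{m-1}) \subseteq (\mathrm{SU}(4))^m$ --- the tuples that secretly realize a shorter circuit --- is Haar-null. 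I would establish this from the claim that $F_m$ has maximal differential rank $\min(15m, \, 2(2^\Sites-1))$ at generic $\vec U$: where $F_m$ is a local immersion of rank $15m$, its generic fibers are discrete and $\dim S_m = 15m$, so $\dim F_m^{-1}(S_{m-1}) \leq \dim S_{m-1} \leq 15(m-1) < 15m$, which is measure-zero. The witness that $F_m$ is not trapped inside $S_{m-1}$ is that $\ket\psi$ itself lies in the image (take all gates trivial) but not in $S_{m-1}$, together with the irreducibility of $(\mathrm{SU}(4))^m$, which is connected, so its image is an irreducible real-analytic variety not contained in the proper subvariety $S_{m-1}$.

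Finally I would transfer from Haar measure to the actual experiment. The fuzzy densities $p_{U,\epsilon}$ are defined against the Haar measure $d\tilde U$, so the law of each sampled gate, and hence the product law of the whole fuzzy circuit and of the randomly prepared $\ket\psi$, is absolutely continuous with respect to Haar measure on the relevant product of copies of $\mathrm{SU}(4)$. Consequently every Haar-null set of gate tuples has probability zero in the experiment, and Fubini promotes the per-preparation null statement to a single measure-zero exceptional set in the joint preparation-and-sampling space, as claimed.

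I expect the main obstacle to be the generic full-rank (immersion) property of $F_m$, which is exactly the step that forces the hypothesis of $\geq \Sites$ layers. A shallow brickwork circuit has gates whose causal cones need not cover all $\Sites$ qubits, so their generator directions can be forced into linear dependence and $F_m$ can fail to attain maximal rank; after $\geq \Sites$ layers the periodic brickwork pattern couples every qubit to every other, which is what I would use to argue that the appended gate-variation directions are generically independent up to the dimension of the state space. Controlling this rank --- equivalently, ruling out an ``accidental'' complexity collapse on a positive-measure set of circuits --- is the crux, whereas the surrounding semialgebraic-geometry and measure-theoretic steps are comparatively routine. A secondary subtlety I would need to nail down is the Zariski-closure bookkeeping in the irreducibility argument, ensuring that $\ket\psi$ genuinely witnesses non-containment of the reachable variety in the closure of $S_{m-1}$, rather than merely lying outside $S_{m-1}$ itself.
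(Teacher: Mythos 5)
Your high-level architecture matches the paper's: reduce uncomplexity non-increase to complexity non-decrease, view the reachable sets as semialgebraic images of contraction maps, compare dimensions, invoke a ``lower-dimensional subsets are null'' lemma, and transfer from Haar to fuzzy distributions by absolute continuity. But the engine you rely on---that $F_m$ has generic differential rank $\min(15m,\,2(2^{\Sites}-1))$, so that a generic $m$-gate brickwork circuit outputs a state of brickwork complexity exactly $m$---is false as stated and, in any corrected form, is precisely the hard step you leave unproved. Already for $m=1$ the rank is at most $6$, not $15$: varying $U_1$ moves $\ket{0^\Sites}$ only within the $\mathbb{CP}^3$ of states of two qubits, since the stabilizer of $\ket{00}$ in $\mathrm{SU}(4)$ is $9$-dimensional. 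More generally, overlapping gates share single-qubit gauge freedom, so the fibers of $F_m$ are never discrete and the accessible dimension grows strictly slower than $15$ per gate; your bound $\dim F_m^{-1}(S_{m-1})\leq \dim S_{m-1}$ therefore does not follow. Your proposed witness for non-degeneracy is also circular: setting the appended gates to the identity lands you at $\ket{\psi}$, which \emph{does} lie in $S_{m-1}$ for $m=p+q$ with $q\geq 1$, so irreducibility of the image gives you nothing.

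The paper sidesteps the need for exact per-gate rank control by proving only the weaker, sufficient inequality $\dim\LParen\mathcal{U}_{\mathrm{state}}(A_{\bw,T+\Sites})\RParen > \dim\LParen\mathcal{U}_{\mathrm{state}}(A_{\bw,T})\RParen$: it exhibits a single point of the extended gate tuple at which the Jacobian acquires at least one new direction. Concretely, at a maximal-rank point of the short architecture the Jacobian's image is spanned by vectors $H_j U_R\cdots U_1\ket{0^\Sites}$, and since the rank is submaximal some Pauli string $P$ gives a vector outside that span; the appended $\Sites$-layer brickwork block contains a light cone, which allows one to choose gates conjugating $P$ into $Z_\ell$ so that $Z_\ell U_R\cdots U_1\ket{0^\Sites}$ enters the image of the extended map independently of the old directions. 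This light-cone/Pauli-conjugation construction (imported from Haferkamp \emph{et al.}) is exactly the ``crux'' you flag but do not supply, and it is also where the $\geq\Sites$-layer hypothesis is actually used. Without it, your argument establishes neither that generic short circuits avoid $S_{m-1}$ nor that appending the fuzzy circuit strictly grows the reachable set, so the measure-zero conclusion does not go through.
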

\noindent
We prove a more technical version of the theorem in App.~\ref{app_Prove_Monotones}.
The proof extends random-circuit results in Ref.~\cite{Haferkamp_21_Linear},
leveraging assumption~(i) in Definition~\ref{def_Fuzzy_Op}.
This second law is for quantum complexity, not for an entropy of the state averaged over noise samples. Such entropies were shown to obey second laws in Ref.~\cite{Gour_15_Resource}. We contrast these entropies with complexity below.
Also, we prove another second law for complexity (another monotone) in our resource theory's second variant.
Both second laws hold even if the fuzziness $\epsilon$ is arbitrarily small---even constant in $\Sites$, such that fuzzy gates mimic the target (ideal) evolution with constant-in-$\Sites$ precision for a time.

\emph{Resource-theory variant~2: Mixed-state evolution.---}In the remainder of this paper, the resource-theory agent does not know which noise sample is realized during any fuzzy gate.
All noise instances are effectively averaged over;
a fuzzy gate implements the quantum channel
$\mathcal{E}(.) = \int \tilde{U}(.)\tilde{U}^\dagger
\,p_{U,\epsilon}(\tilde{U})\,d\tilde{U}$.
The corresponding fuzzy operations form a strict subset of
the set of noisy operations, allowed in the resource theory of informational nonequilibrium~\cite{Horodecki_03_Local,Horodecki_03_Reversible,Gour_15_Resource,Chiribella_17_Microcanonical}.
All results proved about variant~2 of our resource theory
are true also of variant~1.

Variant~2 offers a new approach to defining mixed-state complexity, following Brown
and Susskind's resource-centric vision~\cite{Brown_18_Second}.
A common notion of mixed-state complexity quantifies the 
gates required to prepare a purification of $\rho$
from $\ket{0^{2\Sites}}$.  This measure is the \emph{purification complexity}~\cite{Caceres_20_Complexity,Ruan_20_Purification,Camargo_21_Entanglement},
which differs from the notion of mixed-state complexity captured by our resource theory.
We illustrate with the $\Sites$-qubit maximally mixed state, $\id^{\otimes \Sites} / 2^\Sites$.
A purification of $\id^{\otimes \Sites} / 2^\Sites$ consists of $\Sites$ Bell pairs (maximally entangled states~\cite{NielsenC10}).
Each pair results from starting with $\ket{0^2}$, then performing one single-qubit rotation and one two-qubit entangling gate.
Hence the purification complexity of $\id^{\otimes \Sites} / 2^\Sites$ is upper bounded by $2\Sites$. 

However, $\id^{\otimes \Sites} / 2^\Sites$ 
is invariant under every unitary and so is useless for quantum computation, in the absence of additional qubits.
So is a highly complex state: 
Starting with such a complex state, 
one needs many gates to uncompute even a few qubits to $\ket{0}$'s.
If the agent cannot perform so many gates,
a complex state benefits quantum computation as little as a maximally mixed state does~\cite{Brown_18_Second} (App.~\ref{appx:MixedStateComplexity}).
Our resource theory correspondingly casts a state's complexity as 
the difficulty of extracting $\ket{0}$'s from the state.
See App.~\ref{appx:MixedStateComplexity} for further elaboration.
We quantify the difficulty of extracting $\ket0$'s with a new entropic quantity.

\emph{Complexity entropy.---}We introduce an entropy that quantifies tasks' efficiencies in the resource theory of uncomplexity.
Common entropies do not reflect complexity~\cite{Susskind2014arXiv_notenough}.
For instance, consider a chaotic system evolving unitarily from $\ket{0^\Sites}$. 
The state's von Neumann entropy remains constant, even as the state grows highly complex.
Furthermore, a small subsystem's reduced state tends to equilibrate
on short time scales, so entanglement entropies saturate quickly.
In contrast, the complexity can grow for a time 
$\sim e^\Sites$~\cite{Haferkamp_21_Linear}.
Failing to encode the complexity's time scales,
ordinary entropies cannot capture complexity.
We overcome this obstacle, introducing an entropy that quantifies complexity, inspired by Ref.~\cite{Brandao_21_Models}.
Reference~\cite{Kothakonda_21_Entropy} will detail the entropy's properties. 
References~\cite{Hastad1999SJC_pseudorandom,Chen2017arXiv_computational} introduced related quantities, motivated by 
pseudorandomness and cryptography.

The complexity entropy quantifies how random a state appears if probed only through simple observables.
For instance, consider measuring a simple observable of a highly complex state $\ket{\psi}$.
The outcome is highly random, as if $\ket{\psi}$ were highly entropic~\cite{Gross2009PRL_most}.
Reference~\cite{Brandao_21_Models} introduces 
a strategy for quantifying this apparent randomness:
Quantify the state's distinguishability from the maximally mixed state in an operational task implementable with a limited number of steps.
Inspired by this approach, we use the \emph{hypothesis-testing
entropy} as our quantifier~\cite{Hiai_91_Proper,Dupuis_13_Generalized,HypothesisEntropy,Hayashi,WatrousScripts,Wang2012PRL_oneshot,Tomamichel2013_hierarchy}.

In a hypothesis test, one receives a state, $\rho$ or $\sigma$, and guesses which state arrived. The most general strategy involves a two-outcome measurement, represented quantum-information-theoretically with a positive-operator-valued measure (POVM)~\cite{NielsenC10}
$\{ Q, \id^{\otimes \Sites}  - Q \}$.
Each measurement operator is positive-semidefinite:
$0\leq Q \leq \id^{\otimes \Sites} $.
Outcome $Q$ suggests that the state was $\rho$, 
and $\id^{\otimes \Sites} - Q$ suggests that the state was $\sigma$.
Let $\sigma = \id^{\otimes \Sites} /2^\Sites$.
Suppose that one must, if the state is $\rho$,
guess $\rho$ with a probability at least $\eta \in (0, 1]$.
The minimum probability of wrongly guessing $\rho$,
if the state is $\sigma = \id^{\otimes \Sites} /2^\Sites$, 
defines the hypothesis-testing entropy~\cite{HypothesisEntropy,Wang2012PRL_oneshot,Dupuis2013_DH},
\begin{align}
    H_{\mathrm{h}}^\eta(\rho)
    := \log_2 \Bigg( \min_{\substack{0 \leq Q \leq \id \\ \Tr(Q\rho)\geq\eta}} 
    \bigl\{ \Tr(Q) \bigr\}  \Bigg) \ .
    \label{eq:defn-hypothesis-testing-entropy}
\end{align}

We restrict the measurement's computational difficulty: 
First, we define a set $M_0$ of
zero-complexity measurement operators.
Under such an operator's action, each qubit is 
(i)~projected onto $\ket{0}$ or 
(ii)~not touched (evolved with $\id$).
Define the variable $\alpha_j$ as 1 if qubit $j$ is projected
and as 0 otherwise. If $(\ketbra{0}{0})^0 \equiv \id$,
the measurement operator has the form
\begin{align}
   \label{eq_M_r_Elt_simple}
   \bigotimes_{j=1}^\Sites
   ( {_j}\ketbra{0}{0}_j )^{\alpha_j} 
   = Q_0 \in M_0 \ .
\end{align}
To progress beyond zero-complexity measurements, we
fix an integer $r \geq 0$.
Consider performing $\leq r$ two-local gates,
effecting a unitary $U_r$, before measuring a $Q_0$ \footnote{
The resource theory inspires this limitation on the number of gates performable. We are currently defining the complexity entropy as a quantum-information-theoretic quantity not hitched to the resource theory. However, the resource theory inspires this entropy's definition, and the entropy will be applied in the resource theory. The resource-theory agent may have a tolerance for how fuzzy a state may grow and so may choose to restrict how many fuzzy gates they perform. This restriction inspires the complexity entropy's $r$.}.
The net effect, we define as a \emph{complexity-$r$ measurement}.
The operators $U_r^\dag Q_0 U_r$ form a set $M_r$.
Restricting to $M_r$ the $Q$ in~\eqref{eq:defn-hypothesis-testing-entropy},
we define the complexity entropy.

\begin{definition}[Complexity entropy]
\label{def_Comp_Ent}
The \emph{complexity entropy} of any 
state $\rho$ is, for any error tolerance $\eta \in (0, 1]$,
\begin{align}
   \label{eq_Def_Complex_Ent}
   \ent{r}{\eta} ( \rho )
   & := \min_{\substack{ Q \in M_r, \\ \Tr( Q \rho ) \geq \eta} }
   \{ \log_2  \LParen  \Tr (Q)  \RParen  \} .
\end{align}
\end{definition}
We can understand the definition through two extremes,
detailed in App.~\ref{app_Entropy_Extremes}
and synopsized here.
Suppose that $\rho = \ketbra{\psi}{\psi}$ is pure.
First, let the number $r$ of performable gates be at least 
the number of gates needed to prepare $\ket{\psi}$---let $\ket{\psi}$ be relatively uncomplex.
The complexity entropy will attain its minimum at
$\ent{r}{\eta} (\ket{\psi}) = 0$.
Contrariwise, let $\ket{\psi}$ be highly complex 
and only a few gates be performable (let $r$ be small). 
The complexity entropy maximizes: 
$\ent{r}{\eta} ( \ket{\psi} ) = \Sites$. 

We can choose different conventions in Definition~\ref{def_Comp_Ent}, to suit different setups.
First, we can define $M_0$ in terms of the measurements natural for a given platform.
Second, we can define $M_r$ in terms of any complexity measure, such as Nielsen's~\cite{Nielsen_05_Geometric,Nielsen_06_Quantum,Nielsen_06_Optimal,Dowling_06_Geometry}, rather than in terms of $r$ gates.

The complexity entropy features in our second monotone 
(``second law'' for complexity).
The \emph{complexity negentropy}
$\Sites - \ent{r}{\eta} ( \rho )$
quantifies how far from maximally mixed $\rho$ looks under limited-complexity measurements.
The complexity negentropy declines monotonically in two cases,
which involve two definitions:
(i) Define an \emph{architecture} as the layout of gates in a quantum circuit.
(ii) Define as $\mathcal{E}_{k,r}$ the ensemble of 
$\Sites$-qubit states formed as follows:
Pick $k = 0, 1, \ldots, \Sites$ qubits uniformly randomly;
and pick a $k$-qubit state vector $\ket{\phi}$ Haar-randomly.
Prepare those qubits in $\ket{\phi}$.
Pad $\ket{\phi}$ with $\ket{0}$'s, to produce 
$\ket{\phi}  \ket{ 0^{\Sites - k} }$.
Perform a circuit, with a random architecture, of any $\leq r$ fuzzy gates.

\begin{restatable}[First case of the complexity negentropy's monotonicity]{lemma}{monotoneentropyI}
\label{lemma:monotoneentropyI}
Consider drawing a state from $\mathcal{E}_{k,r}$ uniformly randomly,
then performing an arbitrary fuzzy gate.
Let the number of chosen qubits be $k > \log_2 (15 r)$.
The complexity negentropy $\Sites - \ent{r}{1}$
does not increase, with probability 1.
\end{restatable}
\noindent Appendix~\ref{app_H_Monoton} presents the proof, as well as 
the second (more involved) case of
the complexity negentropy's monotonicity. 

In both cases, $\eta = 1$, enabling us to apply
the algebraic-geometry toolkit of Ref.~\cite{Haferkamp_21_Linear}.
If the accuracy threshold $\eta \ll 1$, a fuzzy gate can decrease $\ent{r}{\eta}$, as $\eta$ can absorb sufficiently small fuzziness.
We therefore conjecture the complexity negentropy's monotonicity whenever $\eta$ is sufficiently large.

\begin{conjecture}[Monotonicity of the complexity negentropy]
\label{conj_H_Mon}
Let $\rho$ denote any $\Sites$-qubit quantum state.
The complexity negentropy $\Sites  -  \ent{r}{\eta} (\rho)$ 
cannot increase under fuzzy operations, 
for any $r \in \mathbb{Z}_{\geq 0}$, 
if $\eta \geq \eta_0(\epsilon)$, 
for some function $\eta_0(\epsilon)$.
\end{conjecture}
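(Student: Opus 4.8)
The plan is to reduce to a single fuzzy gate, pull the optimal output witness back through that gate at the cost of one unit of measurement budget, and then argue that for $\eta$ close enough to $1$ this extra gate cannot lower the complexity entropy. Since fuzzy operations are compositions of fuzzy gates and monotonicity composes, it suffices to treat one fuzzy gate: for every intended two-qubit gate $V$ and its channel $\mathcal{E}(\cdot)=\int\tilde{V}(\cdot)\tilde{V}^\dagger\,p_{V,\epsilon}(\tilde{V})\,d\tilde{V}$, I would show $\ent{r}{\eta}(\mathcal{E}(\ketbra{\psi}{\psi}))\geq\ent{r}{\eta}(\ket{\psi})$, i.e., that the complexity entropy does not decrease.

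First I would set up the pullback. Let $Q=U_r^\dagger Q_0 U_r\in M_r$ be optimal for the output, so that $\Tr(Q\,\mathcal{E}(\ketbra{\psi}{\psi}))\geq\eta$ and $\log_2\Tr(Q)=\ent{r}{\eta}(\mathcal{E}(\ketbra{\psi}{\psi}))$. Unitary invariance of the trace gives
\[
\Tr\!\bigl(Q\,\mathcal{E}(\ketbra{\psi}{\psi})\bigr)=\int\Tr\!\Bigl((U_r\tilde{V})^\dagger Q_0\,(U_r\tilde{V})\,\ketbra{\psi}{\psi}\Bigr)\,p_{V,\epsilon}(\tilde{V})\,d\tilde{V}\geq\eta .
\]
Hence some realization $\tilde{V}$ meets the bound, and since $(U_r\tilde{V})^\dagger Q_0(U_r\tilde{V})\in M_{r+1}$ with $\Tr Q_0=\Tr Q$, we obtain $\ent{r}{\eta}(\mathcal{E}(\ketbra{\psi}{\psi}))\geq\ent{r+1}{\eta}(\ket{\psi})$. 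Because $M_r\subseteq M_{r+1}$ forces $\ent{r+1}{\eta}\leq\ent{r}{\eta}$, the entire difficulty is the \emph{one-gate gap}: whether appending the specific, $\epsilon$-close gate $\tilde{V}$ to the measurement circuit lets it clear strictly more qubits than the best $r$-gate measurement already clears.

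This is exactly where $\eta_0(\epsilon)$ enters. For $\eta=1$ the constraint $\bra{\psi}Q\ket{\psi}=1$ is an algebraic incidence condition, and the toolkit of Appendix~\ref{app_H_Monoton} (following~\cite{Haferkamp_21_Linear}) shows that the set of $\tilde{V}$ for which the extra gate helps is measure zero: generically, an $(r+1)$-gate circuit disentangles no more qubits than the optimal $r$-gate circuit. For $\eta<1$ the slack $1-\eta$ can instead be spent to fake progress---a fuzzy gate that almost clears a qubit leaves overlap $1-O(\epsilon^2)$ with $\ket{0}$, which already meets the threshold once $1-\eta\gtrsim\epsilon^2$, precisely the failure mode flagged above. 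I would therefore set $\eta_0(\epsilon)=1-c\epsilon^2$ and argue that, with $1-\eta$ below this fuzziness-induced deficit, any inequality witness can be rounded to an exact ($\eta=1$-type) witness of the same trace $\Tr Q_0$, collapsing the one-gate gap onto the algebraic case already controlled at $\eta=1$.

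The main obstacle is that the algebraic-geometry machinery is tailored to exact incidence ($\eta=1$), where ``clearable by $\leq r$ gates'' is a variety and genericity yields a measure-zero bad set; with the inequality $\Tr(Q\rho)\geq\eta$ the relevant object becomes a positive-measure semialgebraic sublevel set, so measure-zero genericity no longer applies verbatim. The crux is a \emph{quantitative} genericity statement: one must bound uniformly how the incidence variety deforms as the appended gate ranges over the $\epsilon$-ball, and match this deformation against a lower bound of order $\epsilon^2$ on the overlap deficit, so that the $(1-\eta)$-thickened bad set cannot be entered. Pinning down the constant $c$---and hence the sharp $\eta_0(\epsilon)$---is the delicate step; as in the brickwork theorem, I would expect the final statement to hold only up to a measure-zero set of preparation--and--sampling experiments.
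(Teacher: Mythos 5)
The statement you set out to prove is labeled a \emph{conjecture} in the paper: there is no proof of Conjecture~\ref{conj_H_Mon} to compare your attempt against. What the paper supplies (Appendix~\ref{app_H_Monoton}) is partial evidence only: two lemmas establishing that the complexity negentropy does not increase at \emph{exact} incidence (the constraint $\Tr(Q\ketbra{\psi}{\psi})=1$) and only for states drawn from particular random ensembles $\mathcal{E}_{k,r}$ subject to dimension-counting hypotheses such as $k>\log_2(15r)$ --- not for an arbitrary $\ket{\psi}$ and not for any $\eta<1$. So your proposal should be judged as a plan for an open problem, not as an alternative route to an existing proof.

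As a plan, its skeleton is sound and matches the paper's own diagnosis. The pullback step is correct and is cleanest per realization: for a fixed sampled $\tilde{V}$, any witness $Q=U_r^\dagger Q_0 U_r$ for $\tilde{V}\ket{\psi}$ pulls back to $(U_r\tilde{V})^\dagger Q_0(U_r\tilde{V})\in M_{r+1}$ with the same trace and the same overlap, giving $\ent{r}{\eta}(\tilde{V}\ket{\psi})\geq\ent{r+1}{\eta}(\ket{\psi})$ deterministically (your averaged, channel-level version proves a weaker statement about $\mathcal{E}(\ketbra{\psi}{\psi})$, which is not the pure-state, probability-one form the paper's monotones take). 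Your identification of the one-gate gap and of the failure mode when $1-\eta\gtrsim\epsilon^2$ is exactly the reason the paper retreats to a conjecture. But the two steps that would close the argument are missing, and they are precisely the open content. First, even at $\eta=1$ your genericity claim (``an $(r+1)$-gate measurement generically clears no more qubits than the best $r$-gate one'') is established in the paper only for states sampled from $\mathcal{E}_{k,r}$, via Lemma~\ref{lemma:subsetlemma} and semialgebraic dimension counting; for a fixed, worst-case $\ket{\psi}$ --- which is what the conjecture quantifies over --- the measure-zero argument gives nothing. Second, your proposed rounding of an $\eta$-witness to an exact witness of the same trace is asserted, not constructed; the paper explicitly notes that extending the exact-complexity machinery of~\cite{Haferkamp_21_Linear} to approximate incidence requires geometric information about the set of locally implementable unitaries beyond dimension counting. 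Pinning down $\eta_0(\epsilon)$ (your $1-c\epsilon^2$ is a reasonable guess, consistent with the $O(\epsilon^2)$ fidelity deficit of an $\epsilon$-perturbed gate) is therefore not a ``delicate step'' within an otherwise complete argument --- it is the conjecture itself.
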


\emph{Uncomplexity extraction and expenditure.---}%
The complexity entropy, we prove, quantifies the optimal efficiencies of two operational tasks that we formalize, using the resource theory (Fig.~\ref{fig:UncomplexityExtractionExpenditure}).
\begin{figure}
  \centering
  \includegraphics{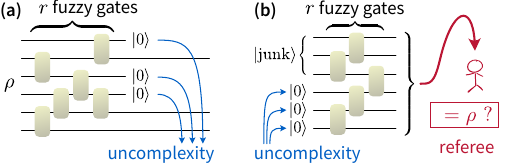}
  \caption{Operational tasks of uncomplexity extraction and
    expenditure in the resource theory.  
    Since gates are fuzzy, the agent can perform only $\leq r$ gates, lest the state grow too noisy to be useful.
    \textbf{(a)}~Extracting uncomplexity from a state $\rho$,
    one applies $\leq r$ fuzzy gates.
    The number of qubits left in the state $\ket{0}$ is 
    the extractable uncomplexity, which equals the complexity entropy of $\rho$.
    \textbf{(b)}~Given enough $\ket{0}$'s,
    an agent can ``spend'' uncomplexity to imitate $\rho$:
    The agent performs $r$ fuzzy gates, preparing a state believed, by a computationally bounded referee, to be $\rho$.}
  \label{fig:UncomplexityExtractionExpenditure}
\end{figure}
The allowed operations' fuzziness, recall, limits the number of gates performable before the state grows too random to be useful.
Our theorems therefore concern an agent limited to performing $\leq r$ gates. 
We quantify states' closeness with the trace distance, 
$\tnorm(\rho, \bar{\rho}) := \onenorm{\rho-\bar{\rho} }/2$, 
wherein $\onenorm{A} = \Tr\sqrt{A^\dagger A}$ 
denotes the trace norm.

We define \emph{uncomplexity extraction} as follows.
Let $\rho$ denote any $\Sites$-qubit state.
For any tolerance $\delta \geq 0$,
we seek a circuit of at most $r \in \mathbb{Z}_{\geq 0}$ fuzzy gates,
and a selection of $w$ qubits, with the following property.
Suppose that $\rho$ undergoes the circuit and then
the nonselected qubits are discarded.
The result is $\delta$-close to $\ket{0^{w}}$ in trace distance
[Fig.~\ref{fig:UncomplexityExtractionExpenditure}\textbf{(a)}].
The following theorem establishes an extraction protocol's existence and near-optimality.
Appendix~\ref{app_Prove_Extract} contains the proof.

\begin{restatable}[Uncomplexity extraction]{theorem}{thmUncompExt}
\label{thm_Uncomp_Ext}
Let $\rho$, $r$, and $\delta$ be as above, and
assume that $\delta \geq r \epsilon$.
For every parameter value
$\eta \in [1 - (\delta-r\epsilon)^2, \, 1]$,
some protocol extracts
$w = \Sites - H_{\mathrm{c}}^{r,\eta}(\rho)$
qubits $\delta$-close to $\ket{0^w}$ in trace distance.
Conversely, every uncomplexity-extraction protocol obeys
$w \leq \Sites - H_{\mathrm{c}}^{r,1-\delta}(\rho)$.
\end{restatable}
\noindent 
Theorem~\ref{thm_Uncomp_Ext} endows the complexity entropy with an operational significance in a quantum-computational task, beyond quantifying states' indistinguishability.

Another bound governs the uncomplexity cost of mimicking a state $\rho$.
Suppose that a computationally limited referee,
upon receiving an $\Sites$-qubit state, 
performs a hypothesis test between
$\rho$ and the completely ignorant observer's null hypothesis, 
$\id^{\otimes \Sites} / 2^\Sites$.
Computationally restricted, the referee can measure only 
operators $Q \in M_r$.
Given $\rho$, the referee must guess $\rho$
with a probability $\geq \eta \in (0, 1]$.
Naturally, the referee minimizes the probability of
guessing $\rho$ when given $\id^{\otimes \Sites} / 2^\Sites$.
Knowing the referee's choice of $Q$
(many $Q$'s can be optimal~\cite{Kothakonda_21_Entropy}),
the agent tricks the referee by preparing 
a simulacrum $\tilde{\rho}$.
The agent borrows $w \leq \Sites$ uncomplex $\ket{0}$'s 
from an ``uncomplexity bank'' (e.g., someone else's laboratory).
The bank tacks on whichever $(\Sites - w)$-qubit state $\sigma$ 
is handy, to raise the total number of qubits to $\Sites$.
The agent knows which qubits are $\ket{0}$'s but not $\sigma$'s form.
The agent transforms the $\Sites$-qubit state with $\leq r$ gates.
The referee must guess, 
with a probability $\geq 1 - \delta \in (0, 1]$, 
that the output is $\rho$.

\begin{restatable}[Uncomplexity expenditure]{theorem}{thmUncompCost}
\label{thm_Uncomp_Cost}
Let $\rho$ denote any $\Sites$-qubit state.
Let $r$ and $\delta$ be as above, 
and assume that the error tolerance is $\delta \geq 2r \epsilon$.
For every $\eta \in (0, 1]$,
and for every $(\Sites-w)$-qubit state $\sigma$,
$\rho$ can be imitated with 
$w = \Sites - \ent{r}{\eta}(\rho)$ uncomplex $\ket{0}$'s.
\end{restatable}
\noindent
Appendix~\ref{app_Prove_Expend} contains the proof.
We expect that 
$\Sites - \ent{r}{\eta}(\rho)$ uncomplex $\ket{0}$'s are necessary.

\emph{Conclusions.---}We have confirmed Brown and Susskind's conjecture~\cite{Brown_18_Second} that a resource theory of uncomplexity can be defined.
The resource theory's allowed operations balance random evolutions, which model features of chaos,
with the agency in resource theories---the agent chooses operations to perform.
We proved two variations on Brown and Susskind's second law of complexity---resource-theory monotones.
Using the resource theory, we formalized uncomplexity expenditure and extraction.
The tasks' optimal efficiencies, we quantified with a complexity entropy that we introduced.
This work introduces into quantum complexity
the resource-theory toolbox that has garnered successes across quantum information theory~\cite{Gour_15_Resource}.

Our resource theory deviates superficially from two holographic conventions.  We invoke the circuit complexity, instead of Nielsen's geometric distance; correspondingly, gates act in discrete time steps, whereas Hamiltonians act continuously.  Our model is motivated by (i)~quantum information theory, where discrete gates form circuits; (ii)~the closeness of circuit complexity to Nielsen's complexity~\cite{Nielsen_06_Quantum}; and
(iii)~random circuits' exhibiting features of chaos.
Reasons (ii) and (iii) suggest that our results might
extend from fuzzy gates to perturbed continuous-time evolutions.

This work establishes several opportunities for future research.
First, the complexity entropy can quantify the efficiencies of tasks other than those defined here. 
Examples include randomness extraction under computational restrictions~\cite{Kothakonda_21_Entropy}.
The complexity entropy may be typically difficult to compute 
but easier to bound. Experimental strategies include hypothesis testing and, combined with Theorem~\ref{thm_Uncomp_Ext}, uncomplexity extraction.
Reference~\cite{Kothakonda_21_Entropy} will explore the complexity entropy's properties and applications.

Second, the complexity entropy suggests an operational answer to a question of active research: how to define mixed-state complexity~\cite{Agon_19_Subsystem,Caceres_20_Complexity,Ruan_20_Purification,Camargo_21_Entanglement,Ruan_21_Circuit,Saha_21_Holographic,Brandao_21_Models}. 
According to our results, complexity quantifies the difficulty of extracting uncomplex $\ket{0}$ qubits. More precisely, the complexity entropy can anchor a version of the strong complexity introduced in Ref.~\cite{Brandao_21_Models}.

Third, proving Conjecture~\ref{conj_H_Mon} would cement the complexity negentropy's interpretation as a resource quantifier. 
Also, a proof would elevate the converse in Theorem~\ref{thm_Uncomp_Ext} to governing arbitrarily many fuzzy gates:
One would evaluate the complexity entropy on $\rho$
and on the post-circuit state, 
then invoke the entropy's monotonicity.

Fourth, one can try to prove that the resource theory of uncomplexity has (or lacks) properties common to resource theories~\cite{Chitambar_19_Quantum}.
For example, can allowed operations interconvert any two states
asymptotically (if arbitrarily many copies are available)?
Furthermore, we anticipate connections with the resource theory of
magic, another model for the difficulty of implementing unitaries~\cite{Veitch2014NJP_resource,Veitch2012NJP_negative,Mari2012PRL_positive,Howard_17_Application,Ahmadi_18_Quantification}.

Fifth, the resource theory can impact holography, many-body physics,
and quantum computation. One might reframe black-hole paradoxes in terms of uncomplexity extraction and expenditure, then prove quantitative results using the resource theory. For example, an agent falling into a black hole wishes to remove firewalls from the horizon, to avoid burning. Tossing in a thermal photon doubles the time for which the agent remains safe~\cite{Susskind_16_Typical}. How much time does a given state---so a given amount of uncomplexity---buy?
Also, uncomplexity's monotonicity under fuzzy circuits is expected to relate to the \emph{switchback effect}~\cite{Stanford_14_Complexity,Brown_18_Second},
which determines how perturbations affect complexity's evolution.
The present work, providing a quantitative
resource theory of uncomplexity as a technical tool, is hoped to galvanize
further studies of space-time's uncomplexity.

\emph{Acknowledgements.---}The authors thank Adam Brown, Giulio Chiribella, Eric Chitambar, Gilad Gour, Aram Harrow, Richard K\"ung, Jonathan Oppenheim, Carlo Maria Scandolo, and Leonard Susskind for valuable discussions.
N.~Y.~H.~thanks John Preskill for a nudge toward the problem of defining a resource theory of uncomplexity and thanks Shira Chapman, Michael Walter, and the other organizers of the 2020 Lorentz Center complexity workshop for inspiration.
This research was supported by NSF grants for the Institute for Theoretical Atomic, Molecular, and Optical Physics at Harvard University and the Smithsonian Astrophysical Observatory; 
grant no.~NSF PHY-1748958; 
and QLCI grant OMA-2120757. 
A.M. was supported by NIST grant 70NANB21H055\_0. 
The Berlin team was funded by the FQXi, the DFG (FOR 2724 and CRC 183), and the Einstein Foundation.

\onecolumngrid

\begin{appendices}

\renewcommand{\thesection}{\Alph{section}}
\renewcommand{\thesubsection}{\Alph{section} \arabic{subsection}}
\renewcommand{\thesubsubsection}{\Alph{section} \arabic{subsection} \roman{subsubsection}}

\makeatletter\@addtoreset{equation}{section}
\def\theequation{\thesection\arabic{equation}}

\section{Proof of the brickwork complexity's monotonicity}
\label{app_Prove_Monotones}

This appendix contains the proof of Theorem~\ref{thm_Monotones}.
We extend results in Ref.~\cite{Haferkamp_21_Linear}
from Haar-random gates to fuzzy gates, then to monotonicity statements. 
Several pieces of background are necessary.
We call an arrangement of gates an \emph{architecture}.
Slotting particular gates into an architecture
produces a circuit.
A circuit may contain a \emph{light cone}, a block of gates
that contains one qubit that links to each other qubit
via a path, perhaps unique to the latter qubit, formed from gates (Fig.~2 of Ref.~\cite{Haferkamp_21_Linear}).

An \emph{accessible dimension} is introduced in Ref.~\cite{Haferkamp_21_Linear}:
Consider choosing two-qubit gates whose fuzzy approximations 
are slotted into any architecture $A$.
The slotting-in forms a circuit that implements a unitary.
All the unitaries so implementable form a set $\mathcal{U}(A)$.
The number of degrees of freedom needed to specify $\mathcal{U}(A)$
is the \emph{accessible dimension} 
$\dim \LParen \mathcal{U}(A) \RParen  
\leq  4^\Sites$~\cite{Haferkamp_21_Linear}.
Consider applying to $\ket{0^\Sites}$ each unitary in $\mathcal{U}(A)$. 
The resulting states form a set that we denote by 
$\mathcal{U}_{\rm state} (A)$.

A fuzzy gate, recall, is drawn from the group SU(4) of two-qubit gates.
Denote by $d \tilde{U}$ the Haar measure on SU(4).
By definition, fuzzy gates are drawn according to probability distributions of the form 
$p_{U, \epsilon} (\tilde{U}) \, d \tilde{U}$, 
for a density function $p_{U, \epsilon} (\tilde{U})$ in $L^1$.
Therefore, the fuzzy-gate distribution is \textit{absolutely continuous} with respect to the Haar measure:
Consider any set of gates that has measure $0$ with respect to the Haar measure on $\mathrm{SU}(4)^{\times R}$. 
The set has measure 0 also for the fuzzy-gate distribution.

By an {$\Sites$-qubit Pauli string}, we mean,
an $\Sites$-fold tensor product of 
(i) one-qubit Pauli operators and 
(ii) one-qubit identity operators $\id$.
We prove the following technical version of Theorem~\ref{thm:monotone_informal},
which governs brickwork circuits 
(Fig.~\ref{fig_Brickwork}).

\begin{figure}[hbt]
\centering
\includegraphics[width=.25\textwidth, clip=true]{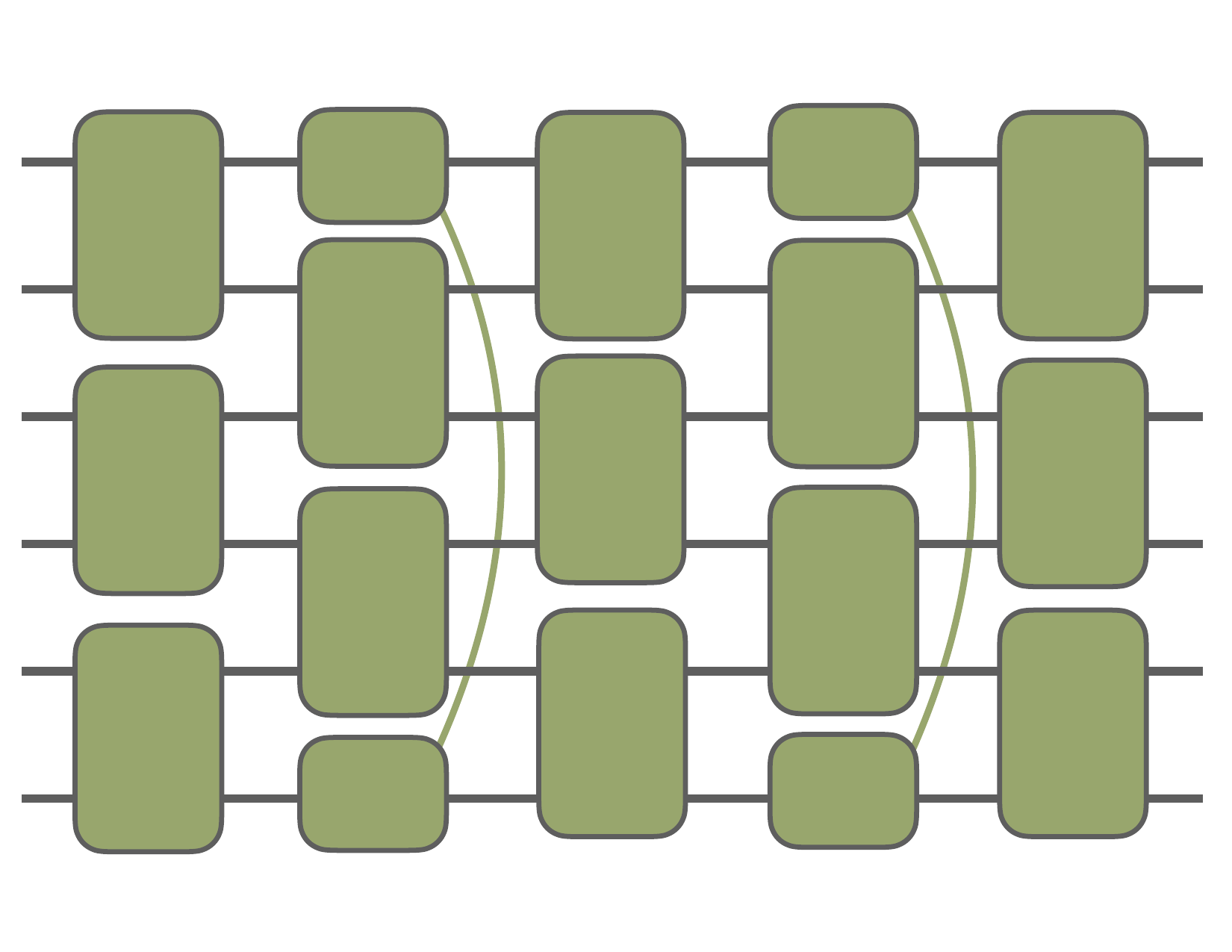}
\caption{A brickwork circuit of the sort featured in Theorem~\ref{thm:monotone_informal}.
Periodic boundary conditions impose a gate (green boxes) on qubits $\Sites$ and 1 in each even-indexed layer.}
\label{fig_Brickwork}
\end{figure}

\begin{theorem}[Monotonicity of exact uncomplexity: formal]
\label{thm_Monotones}
Consider the following protocol:
Choose any two-qubit gates $U_1, U_2, \ldots, U_R$,
for some $R \in \mathbb{Z}_+$. 
Let $\epsilon'>0$, and
draw $\tilde{U}_1, \tilde{U}_2, \ldots, \tilde{U}_R$
near $U_1, U_2, \ldots, U_R$
according to an $\epsilon'$-fuzzy distribution over $\mathrm{SU}(4)^{\times R}$.
Slot the resulting gates into any brickwork architecture.
Apply the resulting circuit to $\ket{0^\Sites}$.
This protocol prepares a state whose brickwork uncomplexity is
$\geq  \comp_\Max  -  R$.
For every integer $R \in \LParen 0,  \,  \Omega (4^\Sites)  \RParen$, the following holds:
Let $V_1, V_2, \ldots, V_{\Sites(\Sites-1)}$  denote any two-qubit gates in an $\Sites$-layer brickwork architecture. 
Let $\tilde{V}_1, \tilde{V}_2, \ldots, \tilde{V}_{\Sites(\Sites-1)}$ denote
corresponding $\epsilon$-fuzzy gates, with $\epsilon > 0$.
With probability 1, the brickwork uncomplexity decreases:
\begin{align}
   \label{eq_Monoton_Particular_Arch} 
   \comp_\Max - \comp_\bw \left( 
   \left[ \tilde{V}_{\Sites(\Sites-1)} 
   \tilde{V}_{\Sites(\Sites-1)-1} \ldots \tilde{V}_1 \right] 
   \left[ \tilde{U}_R \tilde{U}_{R-1} \ldots \tilde{U}_1 \right] \ket{0^\Sites} \right)
   <  \comp_\Max  -  R \ .
\end{align}
\end{theorem}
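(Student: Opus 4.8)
The first assertion is immediate from gate counting: any state prepared by slotting (fuzzy) gates into an $R$-gate brickwork architecture has $\comp_\bw \le R$, so its brickwork uncomplexity is $\ge \comp_\Max - R$. The substance is the strict inequality~\eqref{eq_Monoton_Particular_Arch}, which is equivalent to the claim that the output state $\ket{\psi} := [\tilde{V}_{\Sites(\Sites-1)}\cdots\tilde{V}_1][\tilde{U}_R\cdots\tilde{U}_1]\ket{0^\Sites}$ satisfies $\comp_\bw(\ket{\psi}) > R$ with probability $1$. The plan is to recast ``$\comp_\bw(\ket{\psi}) \le R$'' as membership of $\ket{\psi}$ in a \emph{low-dimensional} semialgebraic set, and then to show that a generic circuit of the enlarged architecture escapes it. Write $G := R + \Sites(\Sites-1)$ for the total gate count, let $A'$ be the composite $G$-gate brickwork architecture, and let $\Phi_{A'} : \mathrm{SU}(4)^{\times G} \to \Hil$ be the real-analytic map carrying a gate tuple to the state it prepares from $\ket{0^\Sites}$, with image $\mathcal{U}_{\rm state}(A')$. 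A state has brickwork complexity $\le R$ exactly when it lies in $\Sigma_R := \bigcup_{|A| \le R} \mathcal{U}_{\rm state}(A)$, a finite union (over the finitely many brickwork architectures with at most $R$ gates) of semialgebraic sets; set $d_{\le R} := \dim \Sigma_R$. The bad gate tuples then form $\mathcal{B} := \Phi_{A'}^{-1}(\Sigma_R)$. Because each fuzzy density lies in $L^1$ and, by assumption~(i) of Definition~\ref{def_Fuzzy_Op}, is supported on an open set of positive Haar measure, the fuzzy-sampling distribution is absolutely continuous with respect to the Haar measure on $\mathrm{SU}(4)^{\times G}$; it therefore suffices to prove that $\mathcal{B}$ is Haar-null, which forces the fuzzy probability of $\mathcal{B}$ to vanish.

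The heart of the argument is the dimension inequality $d' := \dim \mathcal{U}_{\rm state}(A') > d_{\le R}$. For this I would invoke the accessible-dimension machinery of Ref.~\cite{Haferkamp_21_Linear}: for brickwork architectures below the saturation threshold $\sim 4^\Sites$, the accessible (state) dimension is a strictly increasing function of the gate count. The hypothesis $R \in \LParen 0, \Omega(4^\Sites) \RParen$ guarantees that $G = R + \Sites(\Sites-1)$ also remains below saturation, and appending a full $\Sites$-layer brickwork block---which realizes a light cone touching every qubit---is precisely what certifies that the dimension strictly exceeds that of \emph{every} architecture with at most $R$ gates, so that $d' > d_{\le R}$. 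Establishing this strict increment is the main obstacle: it is the technical content carried over from Ref.~\cite{Haferkamp_21_Linear}, and it is why $\Sites$ layers (rather than one) are required, since the $\Sites$-layer depth is what guarantees that the appended gates contribute independent tangent directions rather than redundant ones.

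Granting $d' > d_{\le R}$, I would finish by a fiber-dimension count. The differential $d\Phi_{A'}$ attains its maximal rank $d'$ on a dense open set, whose complement (where the rank drops) is a proper real-analytic subset and hence Haar-null. On the full-rank locus $\Phi_{A'}$ is, by the constant-rank theorem, locally the coordinate projection $\mathbb{R}^D \to \mathbb{R}^{d'}$, where $D := \dim_{\mathbb{R}} \mathrm{SU}(4)^{\times G} = 15\,G$; the preimage of the set $\Sigma_R \cap \mathcal{U}_{\rm state}(A')$, of dimension $\le d_{\le R}$, then has dimension at most $(D - d') + d_{\le R} < D$, using $d_{\le R} < d'$. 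Thus $\mathcal{B}$ is the union of a set of dimension $< D$ and a Haar-null set, hence is itself Haar-null. By absolute continuity the fuzzy probability of $\mathcal{B}$ is $0$, so with probability $1$ one has $\comp_\bw(\ket{\psi}) > R$, i.e.\ $\comp_\Max - \comp_\bw(\ket{\psi}) < \comp_\Max - R$, which is~\eqref{eq_Monoton_Particular_Arch}.

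Two points deserve care when turning this plan into a full proof. First, one must confirm that $\mathcal{U}_{\rm state}(A)$ and the finite union $\Sigma_R$ really are semialgebraic, so that $\mathcal{B} = \Phi_{A'}^{-1}(\Sigma_R)$ is semialgebraic and dimension is well defined; this follows from the Tarski--Seidenberg theorem, since $\mathrm{SU}(4)$ is real-algebraic and the gate-to-state map is polynomial in the matrix entries. Second, the clean reduction above shows that $\mathcal{B}$ is Haar-null \emph{globally}, independently of the specific chosen gates $U_i, V_i$; the fuzziness is exactly what lets us exploit this, since for adversarially chosen unperturbed gates (e.g.\ an arrangement that exactly uncomputes to $\ket{0^\Sites}$) the unperturbed output can have complexity far below $R$, and only the absolute continuity of the perturbation---guaranteed by assumption~(i)---ensures that the perturbed experiment avoids this measure-$0$ coincidence with probability $1$.
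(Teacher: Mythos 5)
Your overall strategy coincides with the paper's: both reduce the claim to a strict dimension inequality between accessible state sets, invoke semialgebraicity (Tarski--Seidenberg) so that dimension and measure-zero statements make sense, conclude via Lemma~\ref{lemma:subsetlemma} (or your equivalent fiber-dimension count) that the low-complexity locus is Haar-null, and transfer from Haar-random to fuzzy gates by absolute continuity. Your formulation via $\Sigma_R = \bigcup_{|A|\le R}\mathcal{U}_{\mathrm{state}}(A)$, a finite union over all $\le R$-gate brickwork architectures, is in fact handled slightly more explicitly than in the paper's write-up, and the first assertion ($\comp_\bw \le R$ by gate counting) is dispatched correctly.

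The gap is that the one step carrying the theorem's content --- the strict inequality $\dim\LParen\mathcal{U}_{\mathrm{state}}(A')\RParen > d_{\le R}$ --- is asserted by appeal to ``accessible-dimension machinery'' rather than proved. The paper supplies an explicit construction here: at a point $x_{\mathrm{max}}$ where the contraction map of the smaller architecture attains its maximal rank $r_{\mathrm{max}}$, the Jacobian's image $\mathrm{span}\{\ket{v_j}\}$ must omit the direction $P\,U'_R\cdots U'_1\ket{0^\Sites}$ for some Pauli string $P$ (proved by contradiction, since otherwise the span would be the whole tangent space and the rank already maximal --- which is where the hypothesis $R\in\LParen 0,\Omega(4^\Sites)\RParen$ enters); the appended $\Sites(\Sites-1)$-gate brickwork block contains a light cone and can therefore be chosen to conjugate the excluded $P$ into $Z_\ell$, producing a new tangent vector $Z_\ell\, U'_R\cdots U'_1\ket{0^\Sites}$ linearly independent of the $\ket{v_j}$, hence a point of strictly larger Jacobian rank for the extended architecture. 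Your claim that the accessible dimension is ``a strictly increasing function of the gate count'' below saturation is also not quite the right invocation: what Ref.~\cite{Haferkamp_21_Linear} provides is a strict increase upon appending a light-cone-containing \emph{block}, not gate-by-gate monotonicity --- which is precisely why the theorem appends $\Sites$ full layers rather than one gate. To turn your reduction into a proof you would need to reproduce (or cite with this precision) the excluded-Pauli-string/light-cone argument; the remainder of your plan then goes through as you describe.
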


$\epsilon$ and $\epsilon'$ can be chosen to be arbitrarily small without affecting the theorem.  
To connect the formal statement above with the main text's informal
statement (Theorem~\ref{thm:monotone_informal}), we proceed as follows.
We choose for the unitaries $U_1, U_2, \ldots, U_R$ to form an optimal brickwork preparation circuit for $\ket\psi$. 
We choose for $\epsilon'$ to be much smaller than
other parameters in the problem.
The latter choice ensures that the state transformed by $V_1, V_2, \ldots,V_{n(n-1)}$ is arbitrarily close to $\ket\psi$.

We need the following lemma, proven as Lemma~1 in Ref.~\cite{Haferkamp_21_Linear}:
	\begin{lemma}\label{lemma:subsetlemma}
	Let $A$ denote any architecture.
	The states preparable with architecture-$A$ circuits form the set $\mathcal{U}_{\mathrm{state}}(A)$.
	Let $M\subset \mathcal{U}_{\mathrm{state}}(A)$ denote any (semialgebraic) subset for which
	$\dim (M) < \dim \LParen \mathcal{U}_{\mathrm{state}}(A) \RParen$.
	Consider drawing an architecture-$A$ circuit uniformly randomly.
	The circuit effects a unitary in $M$ with probability 0.
	\end{lemma}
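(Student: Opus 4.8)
The plan is to recognize Lemma~\ref{lemma:subsetlemma} as an instance of the standard real-algebraic-geometry principle that, under a polynomial map, the preimage of a semialgebraic set whose dimension is strictly below that of the image is a null set for the domain's natural measure. First I would fix notation for the preparation map. Write $G$ for the number of gate slots in the architecture $A$, so that an architecture-$A$ circuit is a point $g = (g_1, \ldots, g_G)$ of the compact, connected, real-analytic manifold $\mathcal{G} := \mathrm{SU}(4)^{\times G}$ of dimension $\nu := 15 G$, carrying the product Haar probability measure; drawing the circuit uniformly randomly is sampling $g$ from this measure. Define $\Phi : \mathcal{G} \to \mathbb{C}^{2^\Sites} \cong \mathbb{R}^{2^{\Sites+1}}$ by $\Phi(g) := U(g) \ket{0^\Sites}$, where $U(g)$ is the product of the $g_j$, each embedded into $\mathrm{U}(2^\Sites)$ on its two target qubits. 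The entries of $\Phi$ are polynomials in the real and imaginary parts of the matrix entries of the $g_j$, so $\Phi$ is real-algebraic; by construction $\mathcal{U}_{\mathrm{state}}(A) = \Phi(\mathcal{G})$, which by Tarski--Seidenberg is a compact semialgebraic set of some dimension $d := \dim \mathcal{U}_{\mathrm{state}}(A)$, and $\Phi^{-1}(M)$ is likewise semialgebraic.

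The structural input I would invoke is the standard fact that, for a real-analytic map, $d$ equals the maximal rank $\max_{g \in \mathcal{G}} \operatorname{rank}(d\Phi_g)$ of the differential. Accordingly, split $\mathcal{G} = \mathcal{G}_{\mathrm{full}} \sqcup \mathcal{G}_{\mathrm{deg}}$, where $\mathcal{G}_{\mathrm{deg}} := \{ g : \operatorname{rank}(d\Phi_g) < d \}$. Because some $d \times d$ minor of the Jacobian of $\Phi$ is nonzero at a rank-maximizing point, that minor is a real-analytic function not identically zero on the connected manifold $\mathcal{G}$; by the real-analytic identity theorem its zero set is nowhere dense and Haar-null, and $\mathcal{G}_{\mathrm{deg}}$, being contained in it, is Haar-null as well. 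On the open, full-measure set $\mathcal{G}_{\mathrm{full}}$ the differential has constant rank $d$, so the constant-rank theorem presents $\Phi$, in suitable local coordinates, as the coordinate projection $\mathbb{R}^\nu \to \mathbb{R}^d$ onto the image slice; in particular its fibers there are $(\nu - d)$-dimensional.

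It then remains to bound $\Phi^{-1}(M)$ on $\mathcal{G}_{\mathrm{full}}$. Since $M$ is semialgebraic with $\dim M < d$, inside each constant-rank chart the trace of $M$ on the $d$ image coordinates is a semialgebraic set of dimension $< d$, hence Lebesgue-null in $\mathbb{R}^d$. Covering $\mathcal{G}_{\mathrm{full}}$ by countably many such charts and applying Fubini in each, the preimage of $M$ is a cylinder over this null set and so is null in $\mathbb{R}^\nu$; equivalently $\dim\!\left( \Phi^{-1}(M) \cap \mathcal{G}_{\mathrm{full}} \right) \le \dim M + (\nu - d) < \nu$. Adjoining the already-null set $\mathcal{G}_{\mathrm{deg}}$, I conclude that $\Phi^{-1}(M)$ is Haar-null, and this measure is exactly the probability that the random architecture-$A$ circuit prepares a state lying in $M$. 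The identical argument applies verbatim to the unitary-valued map $g \mapsto U(g)$ into $\mathrm{U}(2^\Sites)$, should one prefer to read $M$ as a set of unitaries rather than states.

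The main obstacle I anticipate is not any single deep step but the bookkeeping needed to make ``dimension $< d$'' for the possibly singular sets $M$ and $\mathcal{U}_{\mathrm{state}}(A)$ rigorously force measure zero in $\mathcal{G}$. The clean way to dispatch it is to do all quantitative work on $\mathcal{G}_{\mathrm{full}}$, where the constant-rank theorem supplies honest smooth charts and pins the fiber dimension to $\nu - d$, while sweeping every singular or rank-deficient phenomenon into the separately handled null set $\mathcal{G}_{\mathrm{deg}}$. The two supporting facts that must be stated with care are that $d$ is the generic rank of $d\Phi$ and that a not-identically-zero real-analytic function on a connected manifold has null zero set; together these convert ``some minor is nonzero somewhere'' into ``the degenerate locus is negligible,'' which is what powers the whole argument.
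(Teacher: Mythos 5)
Your proof is sound, and it is worth knowing that the paper itself never proves this lemma: it is imported verbatim as Lemma~1 of Ref.~\cite{Haferkamp_21_Linear}, whose argument stays entirely inside semialgebraic dimension theory --- the preimage $\Phi^{-1}(M)$ under the polynomial contraction map is semialgebraic by Tarski--Seidenberg, its dimension is bounded strictly below $\dim \mathrm{SU}(4)^{\times G} = 15G$ using the fact that the generic rank of the contraction map equals $\dim \LParen \mathcal{U}_{\mathrm{state}}(A) \RParen$, and a semialgebraic subset of $\mathrm{SU}(4)^{\times G}$ of submaximal dimension is automatically Haar-null. You instead run a differential-topology route: generic rank equals image dimension, the real-analytic identity theorem (in its measure-zero form, i.e.\ a not-identically-zero analytic function has null zero set) disposes of the rank-deficient locus, and constant-rank charts plus Fubini handle the full-rank locus. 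Both are valid; yours is more self-contained, needing no citation beyond Tarski--Seidenberg, at the price of extra measure-theoretic bookkeeping, whereas the reference's argument obtains the measure-zero conclusion in a single step from ``semialgebraic of dimension $< 15G$ $\Rightarrow$ Haar-null.'' One small repair to make in your writeup: the constant-rank charts are merely real-analytic, not semialgebraic, so the trace of $M$ on the $d$ image coordinates need not remain semialgebraic after the coordinate change; the step survives because a semialgebraic set of dimension $< d$ has vanishing $d$-dimensional Hausdorff measure and smooth diffeomorphisms preserve null sets, so the trace is Lebesgue-null in $\mathbb{R}^d$, which is all your Fubini step actually uses. Your closing remark that the identical argument applies to the unitary-valued map is also apt, since the lemma's conclusion is phrased in terms of the effected unitary while $M$ lives in state space.
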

	
\begin{proof}[Proof of Theorem~\ref{thm_Monotones}]
To prove that the brickwork uncomplexity decreases monotonically, we prove that the brickwork complexity increases monotonically.
Denote by $A_{\bw,T}$ the $T$-layer brickwork architecture, which contains $T(\Sites-1) = R$ gates total.
To prove Ineq.~\eqref{eq_Monoton_Particular_Arch}, we must prove only that
\begin{align}
   \label{eq_Reduce_Proof}
   \dim \LParen 
   \mathcal{U}_{\mathrm{state}}( A_{\bw, T+\Sites} ) \RParen 
   > \dim \LParen \mathcal{U}_{\mathrm{state}}(A_{\bw,T}) \RParen.
\end{align}
The reason is, Ineq.~\eqref{eq_Reduce_Proof} and Lemma~\ref{lemma:subsetlemma} imply the following: 
Consider randomly drawing a unitary effected by a brickwork architecture $A_{\bw,T+\Sites}$.
The unitary has zero probability of being implementable with a brickwork architecture whose $R= T(\Sites-1)$.
		
Let us prove Ineq.~\eqref{eq_Reduce_Proof}.
Consider contracting the gates in $A_{\bw,T}$, 
then applying the resulting unitary to $\ket{0^\Sites}$.
We map a set of $R$ two-qubit gates to 
a point on the unit sphere.
More generally, contraction forms a smooth map
$f : \mathrm{SU}(4)^{\times R} \to S^{ 2\times 2^\Sites-1}$.
The map's greatest possible rank equals 
$\dim \LParen \mathcal{U}_{\mathrm{state}}(A) \RParen$,
we show via semialgebraic geometry in Ref.~\cite{Haferkamp_21_Linear}.
The map's rank also---by definition---equals the rank of the map's Jacobian. 
Therefore, to prove Ineq.~\eqref{eq_Reduce_Proof}, 
we must identify one circuit---one point 
$x=(U_1, U_2, \ldots, U_R, U_{R+1}, \ldots, U_{R+\Sites(\Sites-1)}) 
\in \mathrm{SU}(4)^{\times [R+\Sites(\Sites-1)]}$---for which 
the map's Jacobian has a rank 
$> \dim \LParen \mathcal{U}_{\mathrm{state}}(A_{\bw,T}) \RParen$.
	
We construct that circuit as follows.
Let $P_j$ denote a two-local Pauli operator that acts
on the same qubits as $U_j$.
The Jacobian's image is spanned by 
$\{ (U'_R U'_{R-1} \ldots U'_{j+1}) P_j 
(U'_{j} U'_{j-1} \ldots U'_1) \}_{j, P}$~\cite{Haferkamp_21_Linear}.
Denote by $x_{\mathrm{max}} \in \mathrm{SU}(4)^{\times R}$ 
a point at which the $A$ contraction map achieves its maximal rank,
$r_{\mathrm{max}}$.
We will construct a point 
$x_{\mathrm{ext}} = (U'_1, U'_2, \ldots, U'_R, \:
V'_1, V'_2, \ldots, V'_{\Sites(\Sites-1)})$ 
at which the contraction map's rank exceeds the rank at $r_{\mathrm{max}}$.
(The $V'_j$ operators are gates chosen to prove a variation on Theorem~\ref{thm_Monotones}. In the variation, Haar-random gates replace the fuzzy gates $\tilde{V}_j$. 
By absolute continuity, the theorem follows for fuzzy gates.)

There exist Hermitian operators $H_j$ such that the following is true: The map has a Jacobian whose image
at $x_{\mathrm{max}}$ is spanned by
$\{H_j U_R U_{R-1} \ldots U_1|0^\Sites \rangle\}_{j = 1,2, \ldots r_{\mathrm{max}} } =: \{|v_j\rangle\}$.
These vectors' span excludes some of the states formed by applying
an $\Sites$-qubit Pauli string to
$U_R U_{R-1} \ldots U_1|0^{\Sites}\rangle$~\footnote{
We can prove this claim by contradiction: Assume that, for all Pauli operators $P$,
$P U_r U_{r-1} \ldots U_1 | 0^{\Sites} \rangle \in \mathrm{span} \{ | v_j \rangle \}$.
The Pauli operators form an orthonormal basis for the Hermitian operators defined on the same Hilbert space. 
Therefore, for every state $|\psi\rangle$ in the space, we can build the Hermitian operator
$A' = | \psi \rangle \langle 0^{\Sites} |
(U_1^{' \dagger} U_2^{'\dagger} \ldots U_r^{'\dagger} )
+ \hc$
By the assumption we mean to contradict, 
$A' (U'_r U'_{r-1} \ldots U'_1) |0^{\Sites}\rangle
=|\psi\rangle
\in \mathrm{span}\{|v_j\rangle\}$.
However, we can derive a contradiction to the foregoing equation.
According to the text above, the Jacobian's image has a submaximal rank. Therefore, $\mathrm{span}\{|v_j\rangle\}$ is not the entire space. 
Therefore, we can choose for $|\psi\rangle$ to be orthogonal to the $|v_j\rangle$.
However, we have already proved that 
$\ket{\psi} \in \mathrm{span}\{|v_j\rangle\}$.
We have proven a contradiction, so our first premise is false.}.
The gates $V'_1, V'_2, \ldots, V'_{\Sites(\Sites-1)}$ transform some excluded Pauli string $P$ into $Z_\ell$, for a to-be-specified qubit $\ell$:
$(V'_{\Sites(\Sites-1)} V'_{\Sites(\Sites-1)-1} \ldots V'_1) \, P \, 
\left( V_1^{' \dagger} V_2^{'\dagger} \ldots  
V_{\Sites(\Sites-1)}^{' \dagger} \right)
= Z_\ell$. 
The foregoing claim is true, Ref.~\cite{Haferkamp_21_Linear} shows, under two necessary conditions:
(i) The gates $V'_1, V'_2, \ldots, V'_{\Sites(\Sites-1)}$ are in an architecture that contains a light cone.
(ii) $\ell$ indexes the light cone's final qubit, which connects to all the other light-cone qubits via paths formed from gates.
Both conditions are satisfied by an
$\Sites(\Sites-1)$-gate brickwork circuit and $\ell=1$.
All the states 
$\left( V'_{\Sites(\Sites-1)} V'_{\Sites(\Sites-1)-1} 
\ldots V'_1 \right) H_j 
\left( V_1^{' \dagger} V_2^{'\dagger} \ldots V_{\Sites(\Sites-1)}^{'\dagger} \right)
\left( U'_R U'_{R-1} \ldots U'_1 \right)|0^{\Sites}\rangle$ and 
$Z_\ell (U'_R U'_{R-1} \ldots U'_1)|0^{\Sites}\rangle$ 
(i) are in the Jacobian's image and 
(ii) are linearly independent by assumption. 
Therefore, the contraction map's Jacobian has a rank
$\dim \LParen \mathcal{U}_{\mathrm{state}} (A_{\bw, T}) \RParen$.

Therefore, Ineq.~\eqref{eq_Reduce_Proof} holds for circuits extended with Haar-random gates $V'_j$.
The fuzzy-gate probability distribution is absolutely continuous
with respect to the Haar measure, as explained in the beginning of this appendix. Therefore,  Theorem~\ref{thm_Monotones} is true if fuzzy gates replace the Haar-random gates. 
\end{proof}

\section{Mixed-state complexity and the complexity entropy}
\label{appx:MixedStateComplexity}
\label{app_Entropy_Extremes}

This appendix elaborates on quantifiers of mixed-state complexity. We situate the complexity entropy~\eqref{eq_Def_Complex_Ent} in the landscape of such quantifiers and provide intuition about the entropy. 
Throughout this appendix, $\rho$ denotes an arbitrary $\Sites$-qubit mixed state.

Following Brown and Susskind, we quantify complexity in terms of a state's usefulness for quantum computation. We therefore seek a notion of complexity that assigns to $\id^{\otimes \Sites} / 2^\Sites$ a complexity similar to a maximally complex pure state's.  Our resource theory is designed to reflect such a notion of complexity.
Choosing fuzzy operations to be free ensures that every state can be mapped to the maximally mixed state for free, in variant 2 of the resource theory. The ordering of states in our resource theory is therefore compatible with assigning a high complexity to the maximally mixed state.

Our resource theory's notion of complexity is further motivated by the physical justification for fuzzy operations.
Many resource theories, such as the commonest resource theory of
thermodynamics~\cite{Brandao_13_Resource}, have the following property.  
Suppose that the agent performs a free operation that is $\epsilon$-close to the desired operation, perhaps because noise corrupts the implementation. 
The applied operation prepares the desired output state to within some
error bounded by $\epsilon$.  
Therefore, statements made using this resource theory 
are robust with respect to small errors in an operation's implementation.
This property is essential, lending a resource theory its physical,
operational significance.
In contrast, suppose that a resource theory predicted that a transformation $\rho\mapsto\sigma$ were possible via some operation,
yet a slightly different operation would produce a state far from $\sigma$.
The resource theory would lose its operational significance.

Requiring similar robustness of the uncomplexity resource theory leads to the choice of fuzzy operations as the free operations.
The free operations are defined in terms of elementary operations that can be composed. 
If the agent wishes to implement some unitary $U$, 
they must decompose $U$ into elementary operations.
Each elementary operation might 
suffer from some imperfection in any practical setting.  
Therefore, to render the resource theory robust with respect to implementation errors, we must ensure that physical statements about
the resource theory are robust with respect to perturbations affecting the desired elementary operations. 
A natural way of enforcing this property is to explicitly
model the noise affecting the implemented gates. 
This strategy results in fuzzy operations' being the free operations. 

Fuzzy operations suggest the complexity entropy as 
an alternative approach to defining mixed-state complexity,  
quantifying a state's usefulness in quantum computation.  
Instead of defining one complexity measure for $\rho$,
we define a family of measures parameterized by 
$r \in \mathbb{Z}_{\geq 0}$.
For any fixed $r$, we ask how many $\ket{0}$ qubits
can be extracted from $\rho$, with a probability $\geq \eta$,
via an application of $\leq r$ fuzzy gates.
This quantity essentially equals the complexity entropy $H_{h}^{r,\eta}(\rho)$,
according to Theorem~\ref{thm_Uncomp_Ext}.

We could invert this relation to solve for 
the minimal number $r$ of gates required to extract 
a fixed number $k$ of $\ket0$ qubits.  
However, two problematic situations might arise.  
First, $\rho$ might be fundamentally mixed:
Extracting a large number $k$ of $\ket0$ qubits might be impossible, even with arbitrarily many gates and with an arbitrarily small fuzziness parameter $\epsilon$. 
Second, even if $\rho$ is low-rank, 
the number $r$ of perfect gates needed to extract $k$ pure $\ket{0}$ qubits may be too large to be implementable with fuzzy gates, if $\epsilon$ is too large.
In each of these two situations, the number $r$ of gates required to extract $k$ pure $\ket{0}$ qubits is ill-defined.
Therefore, this number---the inverted relation between $r$ and $k$---forms an incomplete measure of mixed-state complexity. 
The complexity entropy offers a well-defined alternative.

We now expound upon the motivation for the complexity entropy's definition.
The definition is inspired by the
\emph{strong complexity} of Brand\~ao \emph{et al.}~\cite{Brandao_21_Models}.
They quantify the distinguishability of a state $\rho$ from the maximally mixed state
using a measurement whose complexity is limited.  More precisely, they quantify
the probability of successfully distinguishing $\rho$ from $\id^{\otimes n}/2^n$ when each state is provided with a probability $1/2$. Their distinguishability
measure serves as an extension of the trace distance, which quantifies the optimal efficiency with which states can be distinguished through arbitrarily complex measurements.
We consider an alternative distinguishability setting, in which the prior probabilities of receiving $\rho$ or $\id^{\otimes n}/2^n$ are unknown.  If the measurement can be arbitrarily complex, the probability of successfully identifying $\id^{\otimes n}/2^n$, while successfully identifying $\rho$ with probability at least $\eta$, is given by the
hypothesis-testing entropy of $\rho$.  Our complexity entropy is hence an extension of the hypothesis-testing entropy.
Therefore, the complexity entropy importantly has an operational significance beyond the strong complexity of Ref.~\cite{Brandao_21_Models}:
the number of $\ket{0}$ qubits extractable from $\rho$ with a bounded number of limited-precision gates (Theorem~\ref{thm_Uncomp_Ext}).
The complexity entropy's operational significance is underscored by its having units---namely, bits. 

The complexity entropy [Def.~\ref{def_Comp_Ent}]
can be understood through two limits.
Synopsized in the main text, they are detailed here.
In both cases, we suppose that 
$\rho = \ketbra{\psi}{\psi}$ is pure, for simplicity.

First, let the number $r$ of performable gates be at least 
the number of gates needed to prepare $\ket{\psi}$.
$U_r$ undoes the gates in $\ket{\psi}$.
The tensor factors in~\eqref{eq_M_r_Elt_simple}
can therefore be $\ketbra{0}{0}$'s:
Hence $Q = U_r \ketbra{0^\Sites}{0^\Sites} U_r^\dag$
can project onto one low-dimensional subspace,
such that $\Tr(Q) = 1$ is small, as the minimization requires.
If the projected-onto subspace is so small, is
$\Tr (Q \ketbra{\psi}{\psi} ) \geq \eta$ violated?
No, as $U_r \ket{\psi} = \ket{0^\Sites}$ by assumption. 
Therefore, if $\ket{\psi}$ is uncomplex
while many gates are available,
$\ent{r}{\eta} (\ket{\psi}) = 0$.

Contrariwise, let $\ket{\psi}$ be highly complex 
and only a few gates be performable (let $r$ be small). 
Most qubits, probed locally, likely resemble $\id / 2$.
Each $\id / 2$ halves $\Tr (Q \ketbra{\psi}{\psi})$ 
if multiplying a $\ketbra{0}{0}$ in $Q$.
To satisfy $\Tr(Q  \ketbra{\psi}{\psi}) \geq \eta$,
$Q$ must contain many $\id$'s.
Each $\id$ doubles $\Tr(Q)$.
In the extreme case, $Q = \id^{\otimes \Sites}$, 
and $\ent{r}{\eta} ( \ket{\psi} )  =  \Sites$.
Therefore, if $\ket{\psi}$ is complex
while few gates are performable, 
$\ent{r}{\eta} ( \ket{\psi} ) \lesssim \Sites$.


We conclude this appendix by commenting on the disconnect between the notion of preparation complexity and the notion of distinguishability from the maximally mixed state. The preparation complexity quantifies the elementary operations required to prepare a state, while the distinguishability quantifies the complexity of a measurement required to tell a state from $\id^{\otimes n}/2^n$.
Even for pure states, the notions differ widely~\cite{Brandao_21_Models}.
Consider $\ket0\otimes\ket{\psi_{n-1}}$, wherein $\ket{\psi_{n-1}}$ is highly complex.
This state's circuit complexity is $\sim 2^{n}$.  Yet measuring the first qubit in the computational basis distinguishes the state from the maximally mixed state with high probability. Hence this state's strong complexity vanishes.
For all pure states $\ket{\psi}$, the preparation complexity exceeds the complexity of distinguishing $\ket{\psi}$ from $\id^{\otimes n}/2^n$.  Indeed, a preparation circuit for $\ket\psi$ automatically provides
a scheme for distinguishing $\ket\psi$ from $\id^{\otimes n}/2^n$~\cite{Brandao_21_Models}.
For mixed states $\rho$, however, the preparation complexity can lie below the complexity of distinguishing $\rho$ from $\id^{\otimes n}/2^n$.
(This discussion might depend on the specifics of how either notion of complexity is defined.  For concreteness, we regard the purification complexity as the preparation complexity of $\rho$. We also regard the complexity entropy as quantifying the distinguishability of $\rho$ from $\id^{\otimes n}/2^n$.)
$\id^{\otimes n}/2^n$ is easy to prepare, provided we have $n$ pure ancillary qubits.  However $\id^{\otimes n} / 2^n$ is indistinguishable from itself; in the language of the complexity
entropy, extracting $\ket0$ qubits from $\id^n/2^n$ is impossible, regardless of our computational power.
This discrepancy may result partially from the computational model used to define the purification complexity, which involves pure ancillary qubits. Overall, we see no physical reason to worry about the existence of two distinct notions
of complexity. They quantify different physical properties of a state---even pure states.

\section{Uncomplexity-extraction proof}
\label{app_Prove_Extract}

First, we prove a lemma used in the proofs of Theorems~\ref{thm_Uncomp_Ext} and~\ref{thm_Uncomp_Cost}:
Suppose that an arbitrary state $\omega$ undergoes
a desired $r$-gate unitary $U_r$ or
a fuzzy approximation $\tilde{U}_r$.
The transformed states, $U_r \omega U_r^\dag$ and
$\tilde{U}_r \omega \tilde{U}_r^\dag$,
are $r\epsilon$-close in trace distance.
Then, we complete the proof of Theorem~\ref{thm_Uncomp_Ext}.

\begin{lemma}
\label{lemma_Triangle_Inequality}

Let $\omega$ denote an arbitrary $\Sites$-qubit state. 
Consider transforming $\omega$ with 
perfectly implemented gates $V_1, V_2, \ldots, V_r$. 
Each $V_j$ is defined on $\mathbb{C}^{2\Sites}$ 
but transforms just one qubit subspace nontrivially. 
The gates effect the unitary
$U_r := V_r V_{r-1} \ldots V_1$ and yield the state
$U_r \omega U_r^\dag \ .$ 
Suppose that the gates implemented are $\epsilon$-fuzzy. 
Analogously, denote the fuzzy gates by
$\tilde{V}_1, \tilde{V}_2, \ldots, \tilde{V}_r$
and the effected unitary by
$\tilde{U}_r := \tilde{V}_r \tilde{V}_{r-1} \ldots \tilde{V}_1$. 
The fuzzy gates yield the state 
$\tilde{U}_r \omega \tilde{U}_r^\dag \ .$ 
The transformed states are $r\epsilon$-close in trace distance:
\begin{align}
   \label{eq_TriangleLemma_Help1}
   \tnorm \left( U_r \omega U_r^\dag, \, \tilde{U}_r \omega \tilde{U}_r^\dag \right)
   \leq r \epsilon \ .
\end{align}
\end{lemma}

\begin{proof}
We prove the lemma in two steps. First, consider transforming an arbitrary $\Sites$-qubit state $\tau$ with a perfectly implemented gate $V_j$ or a fuzzy gate $\tilde{V}_j$. The fuzzy gates are $\epsilon$-close to the perfect gates in operator norm,
consistently with Definition~\ref{def_Fuzzy_Op}:
\begin{align}
   \label{eq_Triangle_Help1}
   \opnorm{ \tilde{V}_j - V_j }
   \leq  \epsilon 
   \quad  \forall j = 1, 2, \ldots, r \ .
\end{align}
Therefore, the transformed states are close
in the sense that
\begin{align}
   \tnorm \left( V_j \tau V_j^\dag, \,
   \tilde{V}_j \tau \tilde{V}_j^\dag  \right)
   \label{eq_Triangle_Help2}
   &\leq  \tnorm \left( 
   \tilde{V}_j \tau \tilde{V}_j^\dagger,   \, 
    V_j \tau \tilde{V}_j^\dagger \right)
   +  \tnorm \left( V_j \tau \tilde{V}_j^\dagger,  \,
                     V_j \tau V_j^\dagger  \right) \\
   \label{eq_Triangle_Help3}
   &=  \frac{1}{2} \,  \Bigl\lVert
   \left( \tilde{V}_j-V_j \right)  \tau  \tilde{V}_j^\dagger 
   \Bigr\rVert_1
   +  \frac{1}{2} \, \Bigl\lVert
   V_j  \tau \left( \tilde{V}_j - V_j\right)^\dagger 
   \Bigr\rVert_1  \\
   \label{eq_Triangle_Help4}
   &= \frac{1}{2} \, \Bigl\lVert \left( \tilde{V}_j-V_j \right)  \tau \Bigr\rVert_1
        +  \frac{1}{2} \, \Bigl\lVert
             \tau \left( \tilde{V}_j-V_j \right)^\dagger \Bigr\rVert_1 \\
   \label{eq_Triangle_Help5}
   &= \bigl\lVert  (\tilde{V}_j-V_j)  \tau  \bigr\rVert_1 \\
   \label{eq_Triangle_Help6}
   & \leq  \opnorm[\big]{ \tilde{V}_j  -  V_j }  \;
   \onenorm{\tau} \\
   \label{eq_Triangle_Help7}
   & \leq \epsilon .
\end{align}
Inequality~\eqref{eq_Triangle_Help2} follows from the triangle inequality.
Equation~\eqref{eq_Triangle_Help4} follows from the trace distance's unitary invariance. 
Equation~\eqref{eq_Triangle_Help5} follows from the property
$\onenorm{ A } = \onenorm{ A^\dag }$
of all linear operators $A$. 
Inequality~\eqref{eq_Triangle_Help6} follows from H\"older's inequality.
Inequality~\eqref{eq_Triangle_Help7} follows from~\eqref{eq_Triangle_Help1} 
and 
from $\onenorm{ \tau } = 1$.

Second, we prove Ineq.~\eqref{eq_TriangleLemma_Help1} inductively.
The bound~\eqref{eq_TriangleLemma_Help1} holds when $r=1$:
\begin{align}
   \tnorm \left( U_r \omega U_r^\dag, \, \tilde{U}_r \omega \tilde{U}_r^\dag \right)
   = \tnorm \left( 
   V_1 \omega V_1^\dag ,  \,
   \tilde{V}_1 \omega \tilde{V}_1^\dag  \right)
   \leq \epsilon \ .
\end{align}
Suppose that Ineq.~\eqref{eq_Triangle_Help2} holds
for all $r' = 1, 2, \ldots, r - 1$.
Define $U_{r-1} := V_{r-1} V_{r-2} \ldots V_1$ and
$\tilde{U}_{r-1} := \tilde{V}_{r-1} \tilde{V}_{r-2}
\ldots \tilde{V}_1$.
We can use these unitaries to bound the original trace distance as
\begin{align}
   \tnorm \left( U_r \omega U_r^\dag , \,
   \tilde{U}_r \omega \tilde{U}_r^\dag \right)
   & \leq \tnorm \left( V_r U_{r-1} \omega U_{r-1}^\dag V_r^\dag, \,
   V_r \tilde{U}_{r-1} \omega \tilde{U}_{r-1}^\dag V_r^\dag \right) 
   \nonumber \\ & \qquad
   \label{eq_Triangle_Help8}
   + \tnorm \left( V_r \tilde{U}_{r-1} \omega \tilde{U}_{r-1}^\dag V_r^\dag , \,
   \tilde{V}_r \tilde{U}_{r-1} \omega \tilde{U}_{r-1}^\dag \tilde{V}_r^\dag \right) \\
   \label{eq_Triangle_Help9}
   & \leq (r - 1) \epsilon + \epsilon \\
   & = r \epsilon \ .
\end{align}
Inequality~\eqref{eq_Triangle_Help8} follows from the triangle inequality.
Inequality~\eqref{eq_Triangle_Help9} follows from (i) the trace distance's unitary invariance,
(ii) the inductive hypothesis, and (iii) the application of Ineq.~\eqref{eq_Triangle_Help7} to
$\tau = \tilde{U}_{r-1} \omega \tilde{U}_{r-1}^\dag$.
\end{proof}

Having proved a lemma used in Theorems~\ref{thm_Uncomp_Ext} and~\ref{thm_Uncomp_Cost}, we complete the proof of Theorem~\ref{thm_Uncomp_Ext}.

\thmUncompExt*

\begin{proof}
First, we prove that extracting $w = \Sites - \ent{r}{\eta}(\rho)$ uncomplex $\ket{0}$'s from $\rho$ is achievable.
Then, we show that this number is optimal:
No protocol can extract more $\ket{0}$'s.

\emph{Achievability:}
Let $\rho$ denote any $\Sites$-qubit state, 
$\eta \in (0, 1]$, and $r \in \mathbb{Z}_{\geq 0}$.
Consider any $Q$ that achieves 
the minimization in Eq.~\eqref{eq_Def_Complex_Ent}.
$Q$ projects $\Sites - \ent{r}{\eta} (\rho)$ qubits onto $\ket{0}$.
Without loss of generality, we index those qubits as 
$1, 2, \ldots, \Sites - \ent{r}{\eta}(\rho)$.
Denote that set of qubits by $W$;
and the rest of the qubits, by $\bar{W}$.
Denote by $U_r$ the $(\leq r)$-qubit unitary used to implement $Q$:
$Q = U_r^\dag 
\left( \ketbra{0^{\Sites - \ent{r}{\eta}(\rho)}}{0^{\Sites - \ent{r}{\eta}(\rho)}} \otimes \id^{\otimes \ent{r}{\eta}(\rho)}
\right) U_r$.
By the constraint in the definition~\eqref{def_Comp_Ent},
\begin{align}
   \eta
   & \leq \Tr ( Q \rho )
   = \Tr \left( 
   U_r^\dag 
   \left[ \ketbra{0^{\Sites - \ent{r}{\eta}(\rho)}}{0^{\Sites - \ent{r}{\eta}(\rho)}} \otimes \id^{\otimes \ent{r}{\eta}(\rho)}
   \right] U_r
   \rho \right) \\
   \label{eq_Ext_Achiev_Help1}
   & = \Tr_{ \bar{W} } \left(
   \Tr_W (U_r \rho U_r^\dag)
   \ketbra{0^{\Sites - \ent{r}{\eta}(\rho)}}{
           0^{\Sites - \ent{r}{\eta}(\rho)}}
   \right) \ .
\end{align}
The trace's cyclicality implies the final equality.
Equation~\eqref{eq_Ext_Achiev_Help1} implies that
$\Tr_W (U_r \rho U_r^\dag)$ has a fidelity $\geq \eta$
to $\ketbra{0^{\Sites - \ent{r}{\eta}(\rho)}}{
            0^{\Sites - \ent{r}{\eta}(\rho)}}$.
By the relationship between the fidelity and the trace distance~\cite[Theorem~9.3.1]{Wilde_11_From},
\begin{align}
   \label{eq_Ext_Achiev_Help2}
   \tnorm \LParen 
   \Tr_W (U_r \rho U_r^\dag) , \,
   \ketbra{0^{\Sites - \ent{r}{\eta}(\rho)}}{
           0^{\Sites - \ent{r}{\eta}(\rho)}} \RParen
   \leq \sqrt{1 - \eta} \ .
\end{align}
Therefore, if $U_r$ is implemented perfectly---if the gates are implemented perfectly---the protocol extracts $w = \Sites - \ent{r}{\eta}(\rho)$
uncomplex $\ket{0}$'s with accuracy
$\leq \sqrt{1 - \eta} \ .$

Now, suppose that the gates are $\epsilon$-fuzzy.
Attempting to implement $U_r$, the agent actually implements an approximation $\tilde{U}_r$.
By Lemma~\ref{lemma_Triangle_Inequality},
$\tnorm \left( U_r \rho U_r^\dag, \,
\tilde{U}_r \rho \tilde{U}_r^\dag \right)
\leq r \epsilon \ .$
The trace distance is contractive under all completely positive, trace-preserving maps, including the partial trace. Therefore,
$\tnorm \LParen
\Tr_W (U_r \rho U_r^\dag), \,
\Tr_W (\tilde{U}_r \rho \tilde{U}_r^\dag) \RParen
\leq r \epsilon \ .$
We combine this inequality with Ineq.~\eqref{eq_Ext_Achiev_Help2},
using the triangle inequality:
\begin{align}
   \tnorm \LParen
   \Tr_W ( \tilde{U}_r  \rho  \tilde{U}_r^\dag ) , \,
   \ketbra{0^{\Sites - \ent{r}{\eta}(\rho)}}{
           0^{\Sites - \ent{r}{\eta}(\rho)}} \RParen
   & \leq \tnorm \LParen
   \Tr_W ( \tilde{U}_r \rho \tilde{U}_r^\dag ), \,
   \Tr_W ( U_r \rho U_r^\dag )
   \RParen
   \\ & \qquad \nonumber
   + \tnorm \LParen
   \Tr_W ( U_r \rho U_r^\dag ) , \,
   \ketbra{0^{\Sites - \ent{r}{\eta}(\rho)}}{
           0^{\Sites - \ent{r}{\eta}(\rho)}}
   \RParen \\
   & \leq \sqrt{1 - \eta} + r \epsilon \ .
\end{align}
Therefore, for $\delta \geq \sqrt{1 - \eta} + r \epsilon$,
or $\eta \geq 1 - (\delta -r \epsilon)^2$,
 one can extract $w = \Sites - \ent{r}{\eta}(\rho)$
uncomplex $\ket{0}$'s, with accuracy
$\geq \delta$, using $\epsilon$-fuzzy operations.

\emph{Optimality:}
Again, denote by $W$ the set of not-discarded qubits,
and index them as the first qubits.
Denote by $\bar{W}$ the set of discarded qubits.
By the constraints on the extraction protocol,
the final state must be $\delta$-close to the $\ketbra{0^w}{0^w}$:
If $\tilde{U}_r$ denotes the circuit performed with $\leq r$ fuzzy gates,
\begin{align}
   \label{eq_Eta_close}
   \tnorm \LParen
   \Tr_{ \bar{W} }( \tilde{U}_r \rho \tilde{U}_r^\dag ), 
   \ketbra{0^{ w}}{0^{w}}\RParen
   \leq \delta.
\end{align}
We can recast this result in terms of hypothesis testing, 
using the following quantum-information result:
Let $\sigma$ and $\gamma$ denote quantum states 
defined on the same Hilbert space, 
and let $\Lambda$ denote any operator such that $0 \leq \Lambda \leq \id$.
According to Corollary~9.1.1 of Ref.~\cite{Wilde_11_From},
\begin{align}
   \label{eq_Wilde_Ineq}
   \tnorm (\sigma, \gamma)
   \geq  \Tr (\Lambda \gamma)  -  \Tr (\Lambda \sigma) .
\end{align}
Let $\sigma = \Tr_{ \bar{W} } 
\left( \tilde{U}_r \rho \tilde{U}_r^\dag \right)$, and let
$\gamma = \Lambda = \ketbra{0^{w} }{0^{w} }$.
We substitute into Ineq.~\eqref{eq_Wilde_Ineq},
combine the result with Ineq.~\eqref{eq_Eta_close}, 
and rearrange terms. The result is
\begin{align}
   \label{eq_ExtractThm_Help2}
   \Tr \LParen \ketbra{0^{w} }{0^{w} }
   \Tr_{ \bar{W} } (\tilde{U}_r \rho \tilde{U}_r^\dag)  \RParen
   \geq1- \delta .
\end{align}
Let us rewrite the trace's argument such that
each factor is defined on the $\Sites$-qubit Hilbert space. 
Padding the outer product with identity operators at the discarded sites yields
\begin{align}
   \label{eq_Def_Q_0}
   \ketbra{0^w}{0^w}
   \otimes  \id^{\otimes (\Sites - w)}
   =: Q_0 .
\end{align}
Therefore, by Eq.~\eqref{eq_ExtractThm_Help2}, 
$\Tr \left( Q_0 
\left[\tilde{U}_r \rho \tilde{U}_r^\dag \right] \right)  
\geq  1-\delta$.
Let us cycle the $\tilde{U}_r^\dag$ leftward.
Packaging up $\tilde{U}_r^\dag  Q_0  \tilde{U}_r  =:  \bar{Q}$ implies
$\Tr ( \bar{Q} \rho )
   \geq 1-\delta $.
By the foregoing inequality, and by 
the number of the fuzzy gates that constitute $U_r$,
$\bar{Q}$ is in $M_r$. 
By the minimum in Eq.~\eqref{eq_Def_Complex_Ent},
\begin{align}
   \ent{r}{1-\delta} (\rho)
   \leq  \log_2  \LParen  \Tr ( \bar{Q} )  \RParen
   =  \log_2 ( 2^{\Sites-w} )
   = \Sites-w .
\end{align}
The penultimate equality follows from Eq.~\eqref{eq_Def_Q_0}. 
\end{proof}

\section{Uncomplexity-expenditure proof}
\label{app_Prove_Expend}

This appendix contains the proof of Theorem~\ref{thm_Uncomp_Cost}.
We must upper-bound the cost of simulating a state $\rho$ $\delta$-approximately.
First, we prove Theorem~\ref{thm_Uncomp_Cost} in the absence of fuzziness (Lemma~\ref{lemma_Expend_NoFuzz}).
Then, we use Lemma~\ref{lemma_Triangle_Inequality} to extend the proof to fuzzy gates.

\begin{lemma}   
\label{lemma_Expend_NoFuzz}
Let $\rho$ denote an arbitrary $\Sites$-qubit state.
Let $r$ and $\delta$ be as described above Theorem~\ref{thm_Uncomp_Cost},
but let all gates be implemented perfectly ($\epsilon = 0$).
For every $\delta \in (0, 1]$,
every $\eta \in (0, 1]$, 
and every $(\Sites - w)$-qubit state $\sigma$,
$\rho$ can be imitated with
$w = \Sites - \ent{r}{\eta}(\rho)$ uncomplex $\ket{0}$'s.
\end{lemma}

\begin{proof}
We index the qubits such that 
the referee's measurement operator has the form
\begin{align}
   \label{eq_Expend_Lemma1_Help1}
   \Qref
   = U_r^\dag \left( 
   \ketbra{0^{w'}}{ 0^{w'} }  \otimes
   \id^{\otimes (\Sites - w') }
   \right)U_r \ ,
\end{align}
for some $w' \in \{1,2, \ldots, \Sites \}$ 
and some unitary $U_r$ implementable with $\leq r$ gates.
By the constraints on the referee,
$\Tr ( \Qref \rho ) \geq \eta$.
Hence $\Qref$ satisfies the constraint in
the definition~\ref{def_Comp_Ent} of $\ent{r}{\eta}(\rho)$.

Let the agent request $w = w'$ uncomplex $\ket{0}$'s.
The agent can perform the inverse $U_r^\dag$ of 
the referee's unitary.
The simulacrum acquires the form
\begin{align}
   \label{eq_Expend_Lemma1_Help2}
   \bar{\rho}
   = U_r^\dag  
   \left(  \ketbra{0^w}{0^w}  \otimes  \sigma  \right)
   U_r \ .
\end{align}
If the referee receives $\bar{\rho}$,
their probability of guessing $\rho$ is
$\Tr \left( \Qref \bar{\rho} \right)
= 1  \geq 1-\delta$.
Hence $\bar{\rho}$ satisfies the constraint on the agent.

We can derive two expressions for 
$\log_2 \LParen \Tr (\Qref) \RParen$.
First, by Eq.~\eqref{eq_Expend_Lemma1_Help1} and $w' = w$,
$\log_2 \LParen \Tr (\Qref) \RParen = \Sites - w$.
Second, $\Qref$ was chosen to minimize
$\Tr( \Qref \, \id / 2^\Sites )$.
Therefore, $\Qref$ achieves the minimum in
the definition~\eqref{def_Comp_Ent}.
Therefore, $\log_2 \LParen \Tr (\Qref) \RParen 
= \ent{r}{\eta}(\rho)$.
Equating the two expressions for the log,
and solving for $w$, yields
$w = \Sites - \ent{r}{\eta}(\rho)$.
\end{proof}

We have proved a fuzziness-free variation on Theorem~\ref{thm_Uncomp_Cost}.
We extend that proof to prove the theorem itself,
assuming that all gates applied are $\epsilon$-fuzzy. 

\thmUncompCost*

\begin{proof}
Recall the assumption that $\delta\geq 2r\epsilon$.
Due to gate fuzziness, the referee implements the fuzzy operation
$\tilde{U}_r^{\text{ref}}$, instead of $U_r$, 
and effects the POVM $\tilde{Q}$, instead of $\bar{Q}$. 
Likewise, the agent implements the fuzzy operation
$\tilde{U}_r^{\text{agt}}$, instead of ${U}_r$, 
and constructs the state $\tilde{\rho}$, instead of $\bar{\rho}$. 
The referee identifies $\tilde{\rho}$ as $\rho$ with
probability at least $1-\delta$, since
\begin{align}
    \text{Tr} \left( \tilde{Q}\tilde{\rho} \right)  &=  \text{Tr}\left( \left\{ \tilde{U}_r^{\text{ref}\dagger}
                                                       \left[ \ketbra{0^w}{0^w} \otimes \mathbbm{1}^{\otimes(n-w)} \right]
                                 \tilde{U}_r^{\text{ref}} \right\} \left\{ \tilde{U}_r^{\text{agt}\dagger}
                                                       \left[ \ketbra{0^w}{0^w} \otimes \sigma \right]
                                                       \tilde{U}_r^{\text{agt}} \right\} \right) \\
                                                    &= \text{Tr}\left( \left[ \ketbra{0^w}{0^w} \otimes \mathbbm{1}^{\otimes(n-w)} \right]
                                                    \tilde{U}_r^{\text{ref}} \tilde{U}_r^{\text{agt}\dagger}
                                                    \left[ \ketbra{0^w}{0^w} \otimes \sigma \right] 
                                                    \tilde{U}_r^{\text{agt}} \tilde{U}_r^{\text{ref}\dagger} \right) \\
                                                    &\geq \text{Tr}\left( \left[ \ketbra{0^w}{0^w} \otimes \mathbbm{1}^{\otimes(n-w)} \right]
                                                    \left[ \ketbra{0^w}{0^w} \otimes \sigma \right] \right) \nonumber \\ & \qquad
                                                    \label{Expend_Help9}
                                                    -\mathcal{T}\left( \ketbra{0^w}{0^w} \otimes \sigma, \, \tilde{U}_r^{\text{ref}} \tilde{U}_r^{\text{agt}\dagger}
                                                    \left[ \ketbra{0^w}{0^w} \otimes \sigma \right]
                                                    \tilde{U}_r^{\text{agt}} \tilde{U}_r^{\text{ref}\dagger} \right) \\
                                                    \label{Expend_Help10}
                                                    &= 1 - \mathcal{T}\left( \tilde{U}_r^{\text{ref}\dagger} 
                                                    \left[ \ketbra{0^w}{0^w} \otimes \sigma \right]
                                                    \tilde{U}_r^{\text{ref}}, \, \tilde{U}_r^{\text{agt}\dagger} 
                                                    \left[ \ketbra{0^w}{0^w} \otimes \sigma \right]
                                                    \tilde{U}_r^{\text{agt}} \right) \\
                                                    &\geq 1 - \mathcal{T}\left( \tilde{U}_r^{\text{ref}\dagger}
                                                    \left[ \ketbra{0^w}{0^w} \otimes \sigma \right]
                                                    \tilde{U}_r^{\text{ref}}, \, U_r^\dagger
                                                    \left[ \ketbra{0^w}{0^w} \otimes \sigma \right]
                                                    U_r \right) \nonumber \\ & \qquad
                                                    \label{Expend_Help11} 
                                                    - \mathcal{T}\left( U_r^\dagger 
                                                    \left[ \ketbra{0^w}{0^w} \otimes \sigma \right]
                                                    U_r, \, \tilde{U}_r^{\text{agt}\dagger} 
                                                    \left[ \ketbra{0^w}{0^w} \otimes \sigma \right]
                                                    \tilde{U}_r^{\text{agt}} \right) \\
                                                    \label{Expend_Help12}
                                                    &\geq 1 - 2r\epsilon \\
                                                    &\geq 1 - \delta.
\end{align}
Inequality~\eqref{Expend_Help9} follows by the application of Ineq.~\eqref{eq_Wilde_Ineq} to $\sigma = \tilde{U}_r^{\text{ref}} \tilde{U}_r^{\text{agt}\dagger} \left[ \ketbra{0^w}{0^w} \otimes \sigma \right] \tilde{U}_r^{\text{agt}} \tilde{U}_r^{\text{ref}\dagger}$, $\gamma = \ketbra{0^w}{0^w} \otimes \sigma$, and $\Lambda = \ketbra{0^w}{0^w} \otimes \mathbbm{1}^{\otimes(n-w)}$. Equation~\eqref{Expend_Help10} follows from the trace distance's unitary invariance. Inequality~\eqref{Expend_Help11} follows from the triangle inequality. Inequality~\eqref{Expend_Help12} follows from Lemma~\ref{lemma_Triangle_Inequality}.
Therefore, the agent can imitate $\rho$ with probability $\geq 1 - \delta$, using (as shown in Lemma~\ref{lemma_Expend_NoFuzz})
$w = \Sites - \ent{r}{\delta} (\rho)$
uncomplex $\ket{0}$'s.
\end{proof}

\section{Monotonicity of the complexity negentropy in two cases}
\label{app_H_Monoton}

This section supports our conjecture that the complexity negentropy declines monotonically under fuzzy operations (Conjecture~\ref{conj_H_Mon}).
We show that the complexity entropy grows monotonically in two cases.
In both, the error-intolerance parameter $\eta = 1$.
The reason is, our techniques derive from a proof of the linear growth, under random circuits, of the \emph{exact complexity}, the least number of two-qubit gates required to prepare an $\Sites$-qubit state $\ket{\psi}$ from $\ket{0^\Sites}$ exactly~\cite{Haferkamp_21_Linear}.
Define as the \emph{approximate complexity} the least number of two-qubit gates required to prepare $\ket{\psi}$ approximately.
We would have to prove the approximate complexity's linear growth to prove that $\ent{r}{\eta}$ increases monotonically under fuzzy operations when $\eta < 1$.
The approximate proof would require more than the dimension counting used here; we would need insights into the geometry of the set of unitaries implemented by local quantum circuits.

Our proof involves two definitions, which we repeat from the main text:
(i) Define an \emph{architecture} as the layout of gates in a quantum circuit.
(ii) Define as $\mathcal{E}_{k,r}$ the ensemble of 
$\Sites$-qubit states formed as follows:
Pick $k = 0, 1, \ldots, \Sites$ qubits uniformly randomly;
and pick a $k$-qubit state vector $\ket{\phi}$ Haar-randomly.
Prepare those qubits in $\ket{\phi}$.
Pad $\ket{\phi}$ with $\ket{0}$'s, to produce 
$\ket{\phi}  \ket{ 0^{\Sites - k} }$.
Perform a circuit, with a random architecture, of any $\leq r$ fuzzy gates~\footnote{
We draw the random architecture, or directed graph, as follows:
Sequentially draw $R$ uniformly random pairs of qubits $(j, k)$, with $j\neq k$.
For each pair, add to the graph a vertex between edges $j$ and $k$.}.

\monotoneentropyI*

\begin{proof}
Consider the ``uncomplex'' POVM elements $Q \in M_r$ for which
$\log_2 \LParen \mathrm{Tr}(Q) \RParen \leq k$.
Consider the states $\ket{\psi}$ for which,
for some such $Q$,
$\mathrm{Tr}(Q \ketbra{\psi}{\psi}) 
   = 1.$
All the $Q \in M_r$ are mutually orthogonal projectors.
Therefore, the aforementioned $\ket{\psi}$'s form the set
\begin{equation}
   \label{eq_N_r_k}
   \mathcal{U}_{r,k}
   := \bigcup_{\substack{Q\in M_r : \\  
               \log_2 \LParen \Tr(Q) \RParen \leq k}} 
   \mathrm{Im}(Q)
   = \bigcup_{\substack{Q\in M_r : \\  
               \log_2 \LParen \Tr(Q) \RParen \leq k}} 
   \bigcup_{\pi\in \mathrm{S}_n} 
   \Set{U_r U_{r-1} \ldots U_1 \,
   \pi \, \ket{\phi} |0^{n-k} \rangle \, : \,
   U_j
   \text{ is two-local } \forall j, \;
   |\phi \rangle\in (\mathbb{C}^{2})^{\otimes k }}.
\end{equation}
The projector $Q$ has an image Im$(Q)$, and
S$_\Sites$ denotes the group of the permutations of $\Sites$ objects.
Since $\ket{0^k}$ is a state $|\phi \rangle\in \mathbb{C}^{k}$,
$\mathcal{U}_{r,0} \subseteq \mathcal{U}_{r,1}\subseteq \ldots \subseteq \mathcal{U}_{r,k}.$

To characterize the $\mathcal{U}$'s further, 
we denote by $S^{D}$ the sphere in $\mathbb{R}^{D+1}$ \footnote{Our proof could be cast equivalently in terms of complex projective spaces, which are prevalent in the holographic literature \cite{Brown_18_Second}.}.
Also, we identify $\mathbb{C}^{2^k}$ with $\mathbb{R}^{2 \times 2^k}$.
The $\mathcal{U}_{r,k}$ are the images of 
the polynomial contraction maps 
$S^{2 \times 2^k-1}\times \mathrm{SU}(4)^{\times R}
\to (\mathbb{C}^2)^{\otimes k}$.
Therefore, by the Tarski-Seidenberg principle~\cite{Bochnak_13_Real}, the $\mathcal{U}_{r,k}$ are semialgebraic sets. (A semialgebraic set consists of the solutions to a finite set of polynomial equations and inequalities over the real numbers.)
Every semialgebraic set decomposes as a union of smooth manifolds~\cite{Bochnak_13_Real}.
The greatest manifold dimension is the semialgebraic set's dimension~\cite{Bochnak_13_Real}.
Therefore, we can bound $\dim (\mathcal{U}_{r,k})$ as follows.
Since $\{|\phi\rangle |0^{n-k}\rangle\}\subseteq \mathcal{U}_{r,k}$, 
$\dim (\mathcal{U}_{r,k}) 
\geq \dim \left( S^{2 \times 2^k-1} \right) 
= 2 \times 2^k-1$.
According to Lemma~\ref{lemma:monotoneentropyI},
$k>\log_2(15r)$.
Therefore, $2 \times 2^k-1>2 \times 2^{k-1}-1+15r$. 
By parameter counting, $\dim (\mathcal{U}_{r,k-1})\leq 2 \times 2^{k-1}-1+15r$. Therefore,
$\dim (\mathcal{U}_{r,k-1})
   < \dim (\mathcal{U}_{r,k})$.
Therefore, by Ref.~\cite[Lemma~1]{Haferkamp_21_Linear}, $\mathcal{U}_{r,k-1}$ forms a measure-0 set in $\mathcal{U}_{r,k}$.

We apply the above conclusion as follows.
Consider drawing a state uniformly randomly from the ensemble $\mathcal{E}_{k,r}$.
If the state has a complexity entropy $\ent{r}{1} \leq k-1$,
the state is in $\mathcal{U}_{r,k-1}$,
which forms a measure-0 set in $\mathcal{U}_{r,k}$,
we just concluded.
Therefore, the drawn state satisfies $\ent{r}{1} \leq k-1$ with probability 0. 
Every unitary acts on $\mathcal{U}_{r,k}$ as a diffeomorphism
and so does not change the set's dimension.
(The unitary maps $\mathcal{U}_{r,k}$ to a set of ``just as many'' unitaries.)
Fuzzy operations are unitaries, so applying a fuzzy operation cannot decrease $\ent{r}{1}$.
\end{proof}

Having proved that the complexity negentropy decreases monotonically in one case, we proceed to the second case. 
Denote by $\mathcal{E}_{k,A}$ the ensemble defined as $\mathcal{E}_{k,r}$, except that the $r$-gate fuzzy circuit 
has the architecture $A$. 

\begin{lemma}[Second case of the complexity negentropy's monotonicity] \label{lemma:monotoneII}
For every $k<n$ and $r = 0, 1, \ldots, 2 \times \lfloor(2^n-1-2^k)/15\rfloor$, 
there exists an architecture $A$ for which the following holds:
Let $A'$ denote any architecture that contains a light cone.
Consider drawing
(i) a state from $\mathcal{E}_{k,A}$ and 
(ii) an architecture-$A'$ circuit.
Consider following $A$ with the light-cone--containing $A'$, and call the extended architecture $A_{\mathrm{ext}}$.
Running the circuit on the state decreases the complexity negentropy
$\Sites - \ent{r}{1}$, with probability 1 over the ensemble $\mathcal{E}_{k,A_{\mathrm{ext}}}$.
\end{lemma}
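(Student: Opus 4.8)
The plan is to reduce the monotonicity claim to a single strict inequality between the dimensions of two semialgebraic sets and then to apply the dimension-counting machinery of Ref.~\cite{Haferkamp_21_Linear} exactly as in the proof of Theorem~\ref{thm_Monotones}. First I would record the reformulation already used in Lemma~\ref{lemma:monotoneentropyI}: one has $\ent{r}{0}(\ket\psi) \leq k$ if and only if $\ket\psi \in \mathcal{U}_{r,k}$, and the complexity negentropy is $\Sites - \ent{r}{0}$, so a strict decrease of the negentropy is equivalent to a strict increase of $\ent{r}{0}$. Every state drawn from $\mathcal{E}_{k,A}$ has the form $U_A\ket\phi\ket{0^{\Sites-k}}$ and therefore lies in $\mathcal{U}_{r,k}$ (undo $U_A$ with $\leq r$ gates, then project the $\Sites-k$ padded qubits onto $\ket0$), so it has $\ent{r}{0}\leq k$ deterministically. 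Hence it suffices to show that the state obtained after additionally running the light-cone architecture $A'$ lands \emph{outside} $\mathcal{U}_{r,k}$, i.e.\ has $\ent{r}{0}\geq k+1$, with probability $1$.

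Next I would set up the relevant sets. Let $\mathcal{V}$ denote the set of all states $U_{A'}U_A\ket\phi\ket{0^{\Sites-k}}$ reachable as $\ket\phi$ ranges over $S^{2\times 2^k-1}$ and the gates of $A_{\mathrm{ext}}$ range over $\mathrm{SU}(4)$; as in Lemma~\ref{lemma:monotoneentropyI} this is the image of a polynomial contraction map, hence semialgebraic. By Lemma~\ref{lemma:subsetlemma}, together with the absolute continuity of the fuzzy-gate distribution with respect to the Haar measure (and its evident extension to the sphere factor carrying $\ket\phi$), it is enough to prove the strict inequality $\dim(\mathcal{V}) > \dim(\mathcal{U}_{r,k})$. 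Indeed, the intersection $\mathcal{V}\cap\mathcal{U}_{r,k}$ then has dimension at most $\dim(\mathcal{U}_{r,k})<\dim(\mathcal{V})$, so it is a measure-$0$ subset of $\mathcal{V}$, and a generic draw from $\mathcal{E}_{k,A_{\mathrm{ext}}}$ avoids it.

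The upper bound on $\mathcal{U}_{r,k}$ is the one from Lemma~\ref{lemma:monotoneentropyI}: parameter counting gives $\dim(\mathcal{U}_{r,k})\leq (2\times 2^k-1)+15r$, and the stated range $r\leq 2\lfloor(2^\Sites-1-2^k)/15\rfloor$ is precisely what guarantees $\dim(\mathcal{U}_{r,k})<2\times 2^\Sites-1$, i.e.\ that $\mathcal{U}_{r,k}$ is not yet the full sphere. To lower-bound $\dim(\mathcal{V})$ I would choose $A$ to be the ($r$-gate) architecture, one of finitely many, that maximizes the seeded reachable dimension, so that the $A$-part alone already attains $\dim(\mathcal{U}_{r,k})$. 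Then I would rerun the Jacobian construction from the proof of Theorem~\ref{thm_Monotones} with the base state $U_A\ket\phi\ket{0^{\Sites-k}}$ in place of $\ket{0^\Sites}$: at a maximal-rank point the Jacobian image is a proper subspace (because its dimension is below $2\times 2^\Sites-1$), so it excludes the image $P\,U_A\ket\phi\ket{0^{\Sites-k}}$ of some Pauli string $P$; the light cone in $A'$ can be chosen to map $P$ to a single-qubit $Z_\ell$, contributing the linearly independent vector $Z_\ell\,U_A\ket\phi\ket{0^{\Sites-k}}$ to the extended Jacobian image. Hence $\dim(\mathcal{V})\geq \dim(\mathcal{U}_{r,k})+1$, the desired strict inequality.

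The main obstacle is this last step: verifying that the direction supplied by the light cone is genuinely independent of \emph{every} direction reachable by \emph{any} $\leq r$-gate circuit, not merely of those produced by the chosen $A$. This is handled as in Theorem~\ref{thm_Monotones}—the steering of an excluded Pauli string to $Z_\ell$ relies only on $A'$ containing a light cone whose terminal qubit $\ell$ connects to all others—but the bookkeeping must be redone for the seeded base state and for the extra sphere parameter carrying $\ket\phi$, and one must confirm that passing from Haar-random to fuzzy gates preserves the probability-$1$ conclusion via absolute continuity. The role of the hypothesis $r\leq 2\lfloor(2^\Sites-1-2^k)/15\rfloor$ is exactly to keep $\mathcal{U}_{r,k}$ below full dimension, so that such an excluded Pauli string, and hence the extra direction, exists.
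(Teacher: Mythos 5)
Your proposal is correct and follows essentially the same route as the paper's proof: reduce to the strict dimension inequality $\dim(\mathcal{U}_{k,A_{\mathrm{ext}}}) > \dim(\mathcal{U}_{k,A})$ for a dimension-maximizing $r$-gate architecture $A$, use the parameter-counting bound (which the hypotheses on $k$ and $r$ turn into submaximality of the Jacobian's rank) to produce an excluded Pauli string, steer it to a $Z_\ell$ via the light cone in $A'$ as in Theorem~\ref{thm_Monotones}, and conclude via Lemma~\ref{lemma:subsetlemma} and absolute continuity of the fuzzy-gate distribution. The only cosmetic difference is that you compare directly against the architecture-union set $\mathcal{U}_{r,k}$ rather than against each $\mathcal{U}_{k,\tilde A}$ separately, which is equivalent since the finite union's dimension is the maximum over architectures.
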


\begin{proof}
Define, similarly to Eq.~\eqref{eq_N_r_k}, the set
\begin{equation}
   \mathcal{U}_{k,A} 
   := \bigcup_{\substack{Q\in M_r : \\  
               \log_2 \LParen \Tr(Q) \RParen \leq k}} 
   \left\{ U_r U_{r-1} \ldots U_1 
   |\phi\rangle \ket{0^{n-k}} \, : \, 
\text{circuit has architecture}~A, \; 
|\phi\rangle\in (\mathbb{C}^{2})^{\otimes k }  \right\}.
\end{equation}
We proceed similarly to the proof of Theorem~\ref{thm_Monotones}:
Let the architecture $A$ be such that 
$\dim (\mathcal{U}_{k,A})$ has the greatest value achievable with any $(\leq r)$-gate architecture.
We must prove only that 
$\dim (\mathcal{U}_{k,A}) < \dim (\mathcal{U}_{k,A_{\mathrm{ext}}})$:
If this inequality holds, then, by Lemma~\ref{lemma:subsetlemma},
randomly drawing a state from 
$\mathcal{E}_{k,A_{\mathrm{ext}}}$
has 0 probability of being in $\mathcal{U}_{k, \tilde{A}}$, 
for every architecture $\tilde{A}$ with $\leq r$ gates.

The dimension $\dim (\mathcal{U}_{k,A})$ equals
the dimension of a contraction map's Jacobian.
The Jacobian's image is spanned by a set of vectors 
$|v_j\rangle \in \mathbb{R}^{2 \times 2^\Sites-1}$.
We can choose for the vectors to have the form
$|v_j\rangle
= A_j \, U_1 U_2 \ldots U_r|0^{\Sites-k}\rangle |\phi\rangle ,$
for Hermitian operators $A_j$.
There is a Pauli operator $P$ such that 
$|v'\rangle = P U_1 U_2 \ldots U_r|0^{\Sites-k}\rangle 
|\phi\rangle\notin \mathrm{span}\{|v_j\rangle\}$, 
if the Jacobian's image has a rank less than the greatest value possible, $2 \times 2^n-1$ \footnote{
This claim follows as in the proof of Theorem~\ref{thm_Monotones}.}.
The rank's submaximality is guaranteed by the assumptions $k<n$ and 
$r = 0, 1, \ldots, 2 \times \lfloor (2^n-1-2^k)/15\rfloor$.
We can follow $A$ with an architecture-$A'$, depth-$R'$ circuit.
Applying the procedure of Ref.~\cite{Haferkamp_21_Linear} to the Pauli operator $P$, we find a higher-rank point for $A_{\mathrm{ext}}$:
$\dim (\mathcal{U}_{k,A}) < \dim (\mathcal{U}_{k,A_{\mathrm{ext}}}).$
\end{proof}

\end{appendices}

%
%
\bibliographystyle{h-physrev}
\bibliography{Uncomplexity_Bib}

\end{document}